\definecolor{myurlcolor}{rgb}{0,0,0.9}
\newcommand{\proj}[1]{| #1\rangle\!\langle #1 |}
\DeclareMathOperator{\trace}{Tr}
\newcommand{\Ptr}[2]{\trace_{#1}\Pa{#2}}
\newcommand{\Tr}[1]{\Ptr{}{#1}}
\newcommand{\Pa}[1]{\left[#1\right]}
\newcommand{\norm}[1]{\left\lVert #1 \right\rVert}
\newcommand{\Conv}{\mathrm{Conv}}
\theoremstyle{plain}
\newtheorem{thm}{Theorem}
\newtheorem{prop}[thm]{Proposition}
\newtheorem{cor}[thm]{Corollary}
\theoremstyle{definition}
\newtheorem{definition}[thm]{Definition}
\newcommand*{\myproofname}{Proof}
\def\ot{\otimes}
\DeclareMathAlphabet{\mathcal}{OMS}{cmsy}{m}{n}
\begin{document}

  \author{Kaifeng Bu}
 \email{kfbu@fas.harvard.edu}
\affiliation{Department of Physics, Harvard University, Cambridge, Massachusetts 02138, USA}

\author{Dax Enshan Koh}
  \email{dax\_koh@ihpc.a-star.edu.sg}
\affiliation{Institute of High Performance Computing, Agency for Science, Technology and Research (A*STAR), 1 Fusionopolis Way, \#16-16 Connexis, Singapore 138632, Singapore}

\author{Lu Li}
\affiliation{Department of Mathematics, Zhejiang Sci-Tech University, Hangzhou, Zhejiang 310018, China}
\affiliation{School of Mathematical Sciences, Zhejiang University, Hangzhou, Zhejiang 310027, China}

\author{Qingxian Luo}
\affiliation{School of Mathematical Sciences, Zhejiang University, Hangzhou, Zhejiang 310027, China}
\affiliation{Center for Data Science, Zhejiang University, Hangzhou, Zhejiang 310027, China}

\author{Yaobo Zhang}
\affiliation{Zhejiang Institute of Modern Physics, Zhejiang University, Hangzhou, Zhejiang 310027, China}
\affiliation{Department of Physics, Zhejiang University, Hangzhou, Zhejiang 310027, China}

\title{Rademacher complexity of noisy quantum circuits}

\begin{abstract}

Noise in quantum systems is a major obstacle to implementing many quantum algorithms on large quantum circuits. In this work, we study the effects of noise on the Rademacher complexity of quantum circuits, which is a measure of statistical complexity that quantifies the richness of classes of functions generated by these circuits. We consider noise models that are represented by convex combinations of unitary channels and provide both upper and lower bounds for the Rademacher complexities of quantum circuits characterized by these noise models. In particular, we find a lower bound for the Rademacher complexity of noisy quantum circuits that depends on the Rademacher complexity of the corresponding noiseless quantum circuit as well as the free robustness of the circuit. Our results show that the Rademacher complexity of quantum circuits decreases with the increase in noise.

\end{abstract}

\maketitle

\section{Introduction}

The last few years have seen the burgeoning of two different activities in quantum computing: algorithmic developments in quantum-enhanced machine learning \cite{BiamonteNature17,dunjko2018machine,dunjko2016quantum} and experimental developments in building noisy intermediate-scale quantum (NISQ) computers \cite{arute2019quantum,arute2020hartree,gong2021quantum,figgatt2017complete,gibney2020alternative}. In the former activity, a central goal is to design quantum algorithms for machine learning tasks that provide a substantial improvement in performance over classical algorithms. In the latter activity, a key target is to build quantum hardware that would allow for these quantum algorithms as well as algorithms for other applications to be implemented.

While rapid advancements have been made in the above activities, several fundamental challenges remain and need to be addressed. On the one hand, questions remain about how much of an advantage quantum machine learning models can provide over their classical counterparts and what the theoretical limitations of these models are \cite{wiebe2020key}. On the other hand, near-term quantum computers, albeit impressive, are too noisy and error-prone to run many quantum algorithms on large-enough input sizes with sufficiently small error \cite{preskill2018quantum}.

A number of works have sought to shed light on the first of these challenges (see, e.g., \cite{caro2020pseudo,huang2021information, cheng2016learnability, huang2020power,rocchetto2018stabiliser,abbas2020power,wright2020capacity} 
and also a recent survey on quantum learning theory \cite{srinivasan2017guest}). For example, in \cite{huang2021information}, Huang, Kueng, and Preskill study the complexity of training both classical and quantum machine learning models for predicting outcomes of experiments and show that quantum models can provide an exponential advantage over classical models for certain tasks. In \cite{caro2020pseudo}, Caro and Datta use the pseudo-dimension to characterize the expressive power of quantum circuits, which can be applied to bounding the gate complexity of quantum state preparation and the learnability of quantum circuits. In \cite{cheng2016learnability}, Cheng, Hsieh, and Yeh use the fat-shattering dimension to characterize the learnability of unknown quantum measurements and states.

In a series of papers \cite{bu2021on,bu2021effects}, we sought to further address this challenge by studying quantum circuits in terms of their Rademacher complexity, which is a notion of statistical complexity introduced in \cite{Bartlett03} that measures the richness of classes of real-valued functions and provides 
bounds on the generalization error \cite{Bartlett03,koltchinskii2006} associated with learning from training data. These bounds may in turn be used to determine how a hypothesis function may perform on unseen data drawn from an unknown probability distribution. In \cite{bu2021on}, we analyzed the dependence of the Rademacher complexity on various structure parameters of quantum circuits---like their depth and sizes of their input and output registers---and on a particular resource measure, namely the resource measure of magic \cite{howard2017application, wang2019quantifying}, quantified using the so-called $(p,q)$ group norm. In \cite{bu2021effects}, we extended the above analysis by studying the dependence of the Rademacher complexity on general quantum resources by employing tools from the framework of quantum resource theories \cite{COECKE16,chitambar_2019}.

In the above works, our analysis of the Rademacher complexity had been restricted to quantum circuits without noise. However, as the second of the aforementioned challenges highlights, 
this is an unrealistic restriction, especially in this pre-quantum-error-correction NISQ era wherein real-world quantum circuits are susceptible to noise. To understand the effects of noise on the power of quantum circuits, numerous recent works have studied how noise increases the classical simulability of quantum circuits  \cite{gao2018efficient,takahashi2020classically,fujii2016computational,bremner2017achieving,bu2019efficient}. This behooves us to ask a similar question about the Rademacher complexity: how does the Rademacher complexity of quantum circuits change as they are subject to noise?

In this paper, we answer the above question by investigating the effects of noise on the Rademacher complexity of quantum circuits.
More specifically, we consider noise models that are described by mixed-unitary channels, i.e.~quantum channels that can be decomposed as a convex combination of unitary channels. These include well-known noise channels such as the depolarizing channel and the dephasing channel \cite{nielsen2010quantum,watrous2018theory}. For noisy quantum circuits described by these noise models, we give
both upper and lower bounds for their Rademacher complexities. 
Our lower bound for the Rademacher complexity of the noisy quantum circuit depends on that of the corresponding noiseless quantum circuit as well as the free robustness \cite{bu2021effects} of the circuit. In addition, we show that the Rademacher complexity of a quantum circuit decreases as the amount of noise increases.

\section{Methods}

We follow a setup similar to that of \cite{bu2021on,bu2021effects}. Consider a sample $\vec{x}$, encoded as a quantum state
$\ket{\psi(\vec{x})}$. 
Applying a quantum circuit $C$ to it and then measuring a Hermitian operator $H$ with respect to the output produces an expected measurement outcome of
\begin{eqnarray}
f_C(\vec{x})=
\Tr{C(\proj{\psi(\vec x)})H},
\end{eqnarray}
which defines the real-valued function $f_C$. Given a set of 
quantum circuits $\mathcal{C}$, we define the
function class $\mathcal{F}\circ \mathcal{C}:=\set{f_C:C\in\mathcal{C}}$. The Rademacher complexity of $\mathcal{F}\circ\mathcal{C}$ 
 on $m$ independent samples $S=\set{\vec{x_1},\ldots,\vec{x}_m}$ is defined as
 \begin{eqnarray}
 R_S(\mathcal{F}\circ\mathcal{C})
 =\mathbb{E}_{\vec{\epsilon}}
\sup_{C\in\mathcal{C} } \frac{1}{m}
\left|\sum_i\epsilon_i f_{C}(\vec{x}_i)\right|,
 \end{eqnarray}
where each $\epsilon_i$ in the expectation above is a Rademacher random variable taking values $\pm 1$ with equal probability $1/2$.

To incorporate noise into our model, we will consider local noise channels $D$
that are allowed to act on single or multiple registers of the circuit. For a circuit $C$, such noise may be modeled by having noise channels replace bare wires in the $C$. We will further assume that each local noise channel $D$ is a mixed-unitary channel, i.e.~a convex combination of unitary channels. 
In other words, $D$ may be written as
\begin{align}
\label{eq:noichanel}
D(\rho)=
\left(1-\sum_{i}p_i\right)
\rho+\sum_ip_iU_i\rho U^\dag_i,
\end{align}
where $U_i$ is a unitary operator for all $i$ and $(p_i)_i$ is a stochastic vector, i.e.~each $p_i$ satisfies $0 \leq p_i \leq 1$ and $\sum_i p_i = 1$. While the results that we will derive would hold generally for all such $p_i$, it will be adequate for us to further assume that $\sum_ip_i\leq 1/2$--- i.e.~the noise channel $D$ has a probability of at least $1/2$ of transforming the state correctly---as the results we derive (e.g.~Eq.~\eqref{eq:lowerbound}) are meaningful only when this holds. 

Note that for qubit systems, the channels described by Eq.~\eqref{eq:noichanel} are precisely the unital channels \cite{mendl2009unital,tregub1986bistochastic}, i.e.~channels for which the identity operator is a fixed point. Such channels include the well-known depolarizing channel and dephasing channel.

Next, we introduce the following definition.

\begin{definition}[$\mathcal C$-compatible]
Let $\mathcal C$ be a set of quantum circuits and $D$ be a mixed-unitary quantum channel. 
We say that $D$ is $\mathcal C$-\textit{compatible} if for any circuit $C \in \mathcal C$ and for any set of wires $W$ in $C$, there exists a set of unitaries $\mathcal U = \{U_i\}_i$ such that
\begin{enumerate}
    \item $D$ is a convex combination of $U_i(\cdot) U_i^\dag$'s, where $U_i \in \mathcal U$, and the identity superoperator (i.e.~$D$ is of the form given by Eq.~\eqref{eq:noichanel}), and
    \item for all $i$, the circuit $\tilde C_i$ formed by replacing the wires in $W$ by $U_i$ satisfies $\tilde C_i \in \mathcal C$.
\end{enumerate}
\end{definition}

 \begin{figure}[h]
  \center{\includegraphics[width=8cm]  {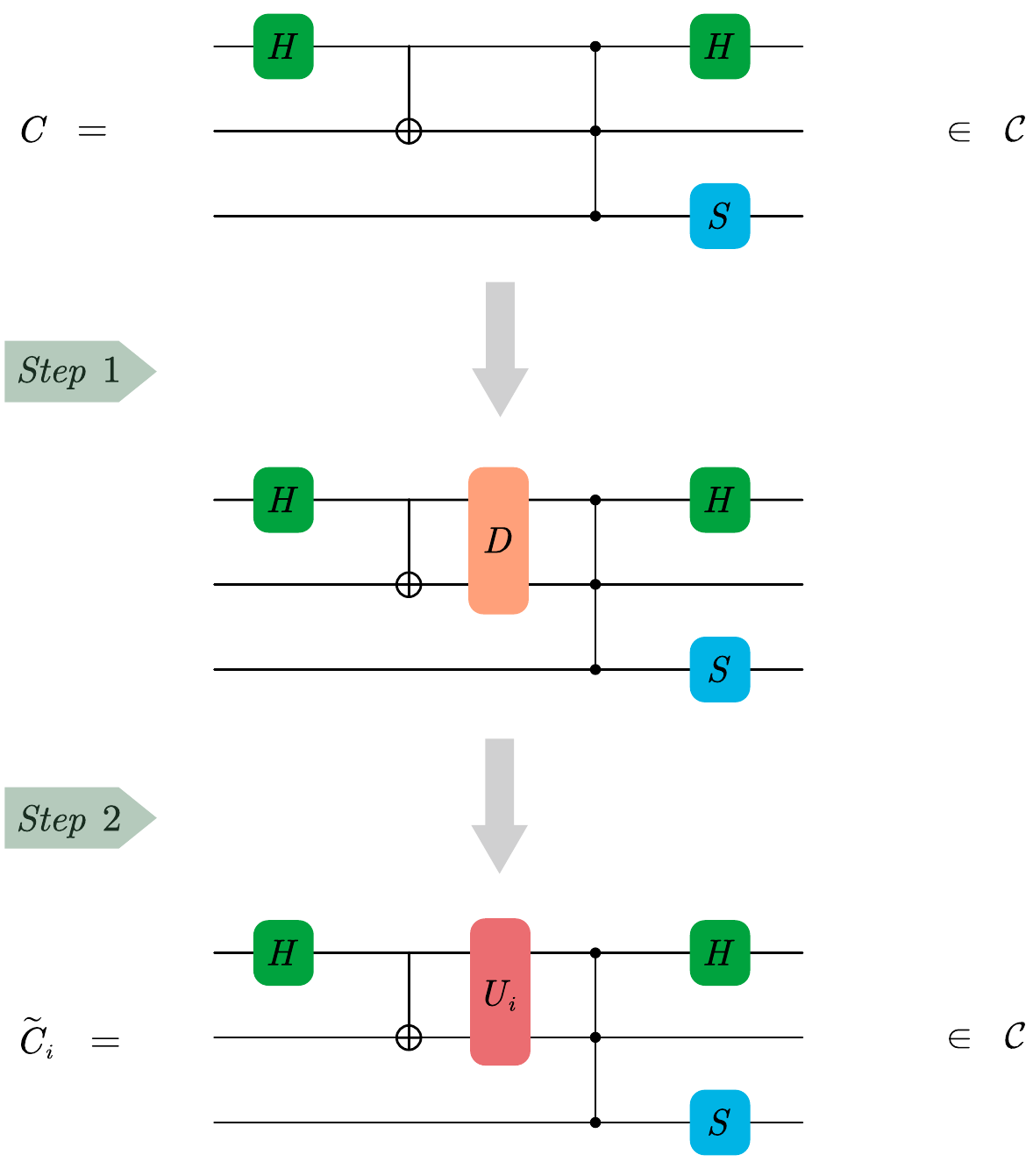}}     
  \caption{An illustration describing the definition of $\mathcal C$-compatibility of mixed-unitary quantum channels.}
  \label{fig0}
 \end{figure}

Intuitively, this may be understood as follows. Take any circuit $C \in \mathcal C$.  Consider a noisy version $\tilde C$ of $C$ that is formed by having the local noise channel $D$ act on one of the wires in $C$ at a specified location. Now, if there exists a decomposition \eqref{eq:noichanel} of $D$ such that for any $i$, replacing the channel $D$ in $\tilde C$ by the gate $U_i$ results in a circuit that is still in $\mathcal C$, then we say that $D$ is $\mathcal C$-\textit{compatible} (See Fig.~\ref{fig0} for an illustration describing this definition). 
For example, let $\mathcal C_n$ denote the set of Clifford circuits and let 
\begin{align}\label{eq:depC}
D_{\epsilon}(\rho)=(1-3\epsilon)\rho
+\epsilon X\rho X+\epsilon Y\rho Y+\epsilon Z\rho Z
\end{align}
denote the depolarizing channel. Since each of the Pauli matrices $X, Y, Z$ is in the Clifford group, the depolarizing channel $D_{\epsilon}$ is $\mathcal C_n$-compatible.

\section{Results}

\subsection{Upper bound}

In this section, we will prove some upper bounds on the Rademacher complexity of noisy circuits. Consider a collection $\mathcal D$ of local noise channels. Let $\mathcal{C}^{(k)}_{\text{noisy}}$ denote the set 
of quantum circuits with exactly  $k$ occurrences of noise channels from $\mathcal D$ acting on some positions of an ideal quantum circuit $C \in \mathcal{C}$.  With this definition, note that we have 
$\mathcal{C}^{(0)}_{\text{noisy}}=\mathcal{C}$.
For each noisy circuit class $\mathcal{C}^{(k)}_{\text{noisy}}$, we obtain 
the corresponding function class
$\mathcal{F}\circ\mathcal{C}^{(k)}_{\text{noisy}}$. We now state and prove a relationship between the 
Rademacher complexities  of $\mathcal{F}\circ\mathcal{C}^{(k)}_{\text{noisy}}$ and 
$\mathcal{F}\circ\mathcal{C}^{(k+1)}_{\text{noisy}}$.

\begin{prop}
If the noise channels acting on the class $\mathcal C$ of quantum circuits are $\mathcal{C}$-compatible, 
then 
\begin{eqnarray}
R_S\left(\mathcal{F}\circ\mathcal{C}^{(k+1)}_{\text{noisy}}\right)
\leq R_S\left(\mathcal{F}\circ\mathcal{C}^{(k)}_{\text{noisy}}\right),
\end{eqnarray}
for any integer $k\geq 0$.
\end{prop}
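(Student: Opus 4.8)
The plan is to show that every function $f_{\tilde C}$ with $\tilde C \in \mathcal{C}^{(k+1)}_{\text{noisy}}$ can be written as a convex combination of functions drawn from $\mathcal{F}\circ\mathcal{C}^{(k)}_{\text{noisy}}$, and then to push this convexity through the definition of Rademacher complexity. First I would single out one of the $k+1$ noise channels in $\tilde C$, say one acting on a set of wires $W$, and express the whole circuit as a superoperator $\tilde C = \mathcal B \circ D_W \circ \mathcal A$, where $D_W$ is the chosen mixed-unitary channel (tensored with the identity on the remaining wires) and $\mathcal A,\mathcal B$ collect everything before and after it, including the other $k$ noise channels.

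Next I would expand $D_W$ using its mixed-unitary form from Eq.~\eqref{eq:noichanel}, namely $D_W = (1-\sum_i p_i)\mathcal I_W + \sum_i p_i\,\mathcal U_{i,W}$ with $\mathcal U_{i,W}(\cdot)=U_i(\cdot)U_i^\dag$. By linearity of the superoperators this gives $\tilde C = (1-\sum_i p_i)\,\tilde C_0 + \sum_i p_i\,\tilde C_i$, where $\tilde C_0$ is obtained by deleting the singled-out channel (replacing it with a bare wire) and $\tilde C_i$ by replacing it with the gate $U_i$. Taking the trace against $H$ and again using linearity yields
$$f_{\tilde C}(\vec x) = \Big(1-\sum_i p_i\Big) f_{\tilde C_0}(\vec x) + \sum_i p_i\, f_{\tilde C_i}(\vec x),$$
a genuine convex combination, since the coefficients $1-\sum_i p_i$ and $p_i$ are nonnegative and sum to one.

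The crucial step, and the one I expect to be the main obstacle, is to argue that each $\tilde C_0$ and $\tilde C_i$ actually lies in $\mathcal{C}^{(k)}_{\text{noisy}}$, so that the above is a combination of admissible functions rather than arbitrary ones. Here I would invoke $\mathcal C$-compatibility: the ideal circuit underlying $\tilde C$ belongs to $\mathcal C$, and compatibility guarantees a decomposition of $D$ whose unitaries $U_i$, when substituted for the wires $W$, leave that underlying ideal circuit inside $\mathcal C$. Consequently $\tilde C_i$ (and $\tilde C_0$, coming from the identity term) is an ideal circuit in $\mathcal C$ carrying exactly the remaining $k$ noise channels, hence an element of $\mathcal{C}^{(k)}_{\text{noisy}}$. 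One has to be careful that compatibility, although phrased for circuits in $\mathcal C$, is applied to the bare ideal layer rather than to the already-noisy $\tilde C$; this is precisely why the substitution is performed on the underlying circuit.

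Finally I would insert the convex decomposition into the Rademacher complexity. For each fixed $\vec\epsilon$, the triangle inequality gives $|\sum_j \epsilon_j f_{\tilde C}(\vec x_j)| \le (1-\sum_i p_i)|\sum_j \epsilon_j f_{\tilde C_0}(\vec x_j)| + \sum_i p_i |\sum_j \epsilon_j f_{\tilde C_i}(\vec x_j)|$, and each term is bounded by $\sup_{C'\in\mathcal C^{(k)}_{\text{noisy}}}|\sum_j\epsilon_j f_{C'}(\vec x_j)|$. Since the coefficients sum to one, the supremum over $\tilde C\in\mathcal C^{(k+1)}_{\text{noisy}}$ is dominated by the supremum over $\mathcal C^{(k)}_{\text{noisy}}$; taking $\mathbb E_{\vec\epsilon}$ and dividing by $m$ then yields the claimed inequality.
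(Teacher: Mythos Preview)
Your proposal is correct and follows essentially the same approach as the paper: both show $\mathcal{C}^{(k+1)}_{\text{noisy}}\subset\Conv(\mathcal{C}^{(k)}_{\text{noisy}})$ by expanding one of the $k+1$ mixed-unitary noise channels via $\mathcal C$-compatibility, and then pass to Rademacher complexity. The only cosmetic difference is that the paper invokes the general fact that Rademacher complexity is invariant under taking the convex hull, whereas you spell this out via the triangle inequality; these are equivalent.
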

\begin{proof}
The proof of this statement is straightforward.
Since the noise channels are $\mathcal{C}$-compatible, 
each noisy quantum circuit in $\mathcal{C}^{(k+1)}_{\text{noisy}}$
expressed as a quantum channel can be written as a convex combination 
of the noisy quantum circuits in $\mathcal{C}^{(k)}_{\text{noisy}}$, i.e.~$
\mathcal{C}^{(k+1)}_{\text{noisy}}
\subset \Conv(\mathcal{C}^{(k)}_{\text{noisy}})
$.
Therefore, we have 
\begin{eqnarray*}
R_S\left(\mathcal{F}\circ\mathcal{C}^{(k+1)}_{\text{noisy}}\right)
\leq R_S\left(\mathcal{F}\circ \Conv(\mathcal{C}^{(k)}_{\text{noisy}})\right)
= R_S\left(\mathcal{F}\circ\mathcal{C}^{(k)}_{\text{noisy}}\right),
\end{eqnarray*}
where the equality comes from the fact that the Rademacher complexity is invariant under
convex combinations. 
\end{proof}

The above proposition tells us that 
the Rademacher complexity of quantum circuits under noise is nonincreasing in general. 

 \begin{figure}[!ht]
  \center{\includegraphics[width=8cm]  {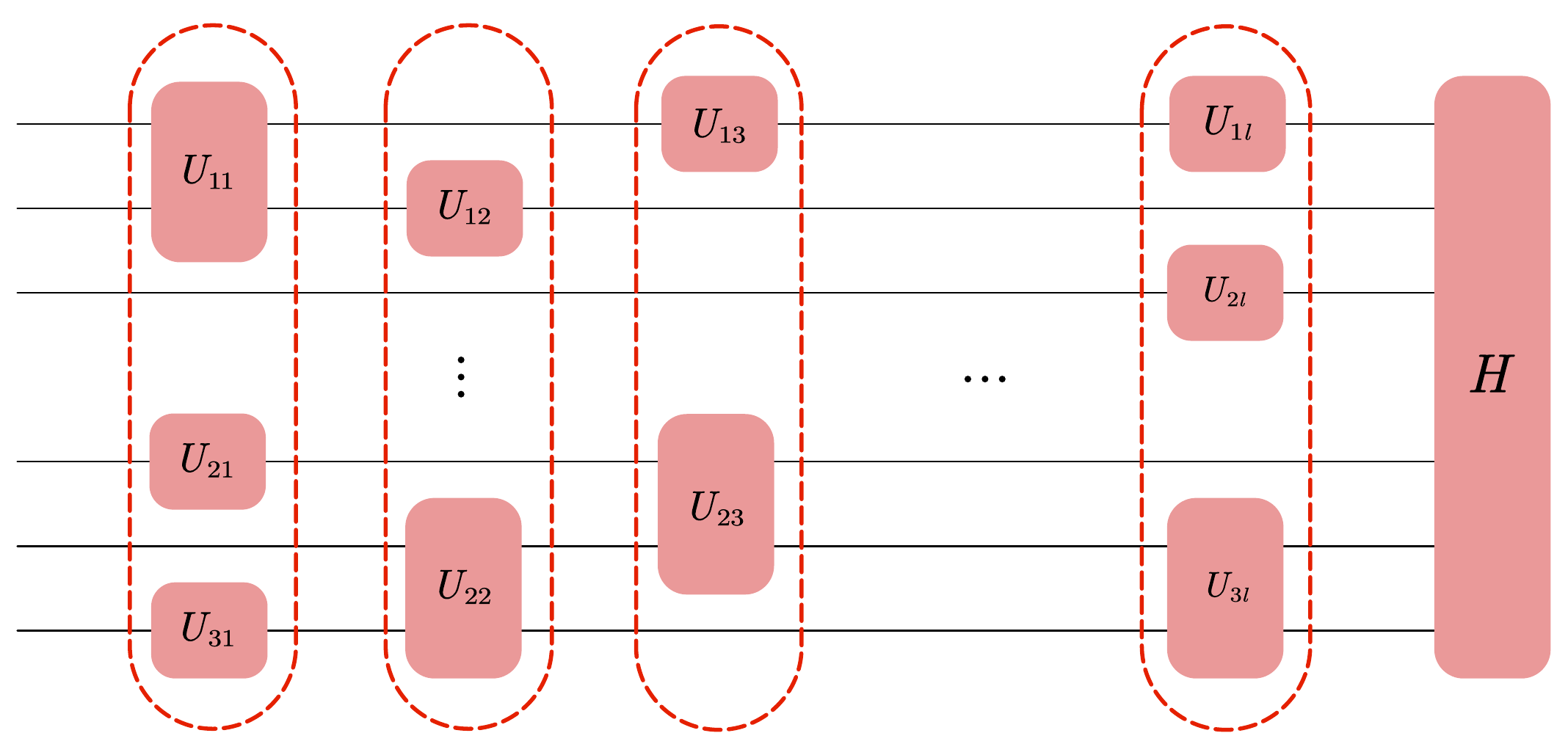}}     
  \caption{Circuit diagram of a quantum circuit with a fixed structure.}
  \label{fig1}
 \end{figure}

Next, let us consider the effects of noise on the set of quantum circuits with a fixed structure and bounded resources. To this end, we will make use of a resource measure of magic that we introduced in \cite{bu2021on} for quantum channels, namely the $(p,q)$ group norm $\norm{M^{\Phi}}_{p,q}$ of the representation matrix of quantum channels with respect to the Pauli basis, defined as
\begin{eqnarray}
M^{\Phi}_{\vec{z}\vec{x}}
=\frac{1}{2^{n_2}}\Tr{P_{\vec{z}}\Phi(P_{\vec{x}})},
\end{eqnarray} 
where $P_{\vec{x}}$, $P_{\vec{z}}$ are the Pauli operators corresponding to the strings $\vec{x}\in \set{0,1,2,3}^{n_1}$ and  $\vec{z}\in \set{0,1,2,3}^{n_2}$
. For any $N_1\times N_2$ matrix $M$, 
the $(p,q)$ \textit{group norm} of $M$, where $0<p,q\leq \infty$, is defined as
$
\norm{M}_{p,q}=
\left(\frac{1}{N_1}\sum_{i}\norm{M_i}^q_p\right)^{1/q}
$ and $\norm{M_i}_p=\left(\sum_{j}M^p_{ij}\right)^{1/p}$.

Given a set of quantum circuits with a fixed structure $\mathcal{A}$ (for example, see Figure \ref{fig1}), 
let us define  the vector $\vec{\mu}_{p,q}=\left(\norm{M^{\Phi_{ij}}}_{p,q}\right)_{ij}$, where $\Phi_{ij}$ is the 
$i$-th quantum channel in the $j$-th layer. 
Furthermore, let us define the inequality $\vec{\mu}_{p,q}\leq \vec{\mu}$ to mean
$\norm{M^{\Phi_{ij}}}_{p,q}\leq  \mu_{ij}$ for all $i,j$. We define
$
\mathcal{C}^{\mathcal{A}}_{\vec{\mu}_{p,q}\leq\vec{\mu} }
$ to be the set of quantum circuits with a fixed structure $\mathcal{A}$
with bounded resource $\vec{\mu}_{p,q}\leq \vec{\mu}$. Here, we consider
the case where $p=1, q=\infty$, i.e., $
\mathcal{C}^{\mathcal{A}}_{\vec{\mu}_{1,\infty}\leq\vec{\mu} }
$.

We now describe our noise model. Recall the definition of the single-qubit depolarizing channel 
$D_{\epsilon}$ defined in Eq.~\eqref{eq:depC}. 
We will consider noisy quantum circuits where  there is a (local) depolarizing channel acting after each gate
 $\Phi$ (See Figure \ref{fig2}). 
  \begin{figure}[!ht]
  \center{\includegraphics[width=8cm]  {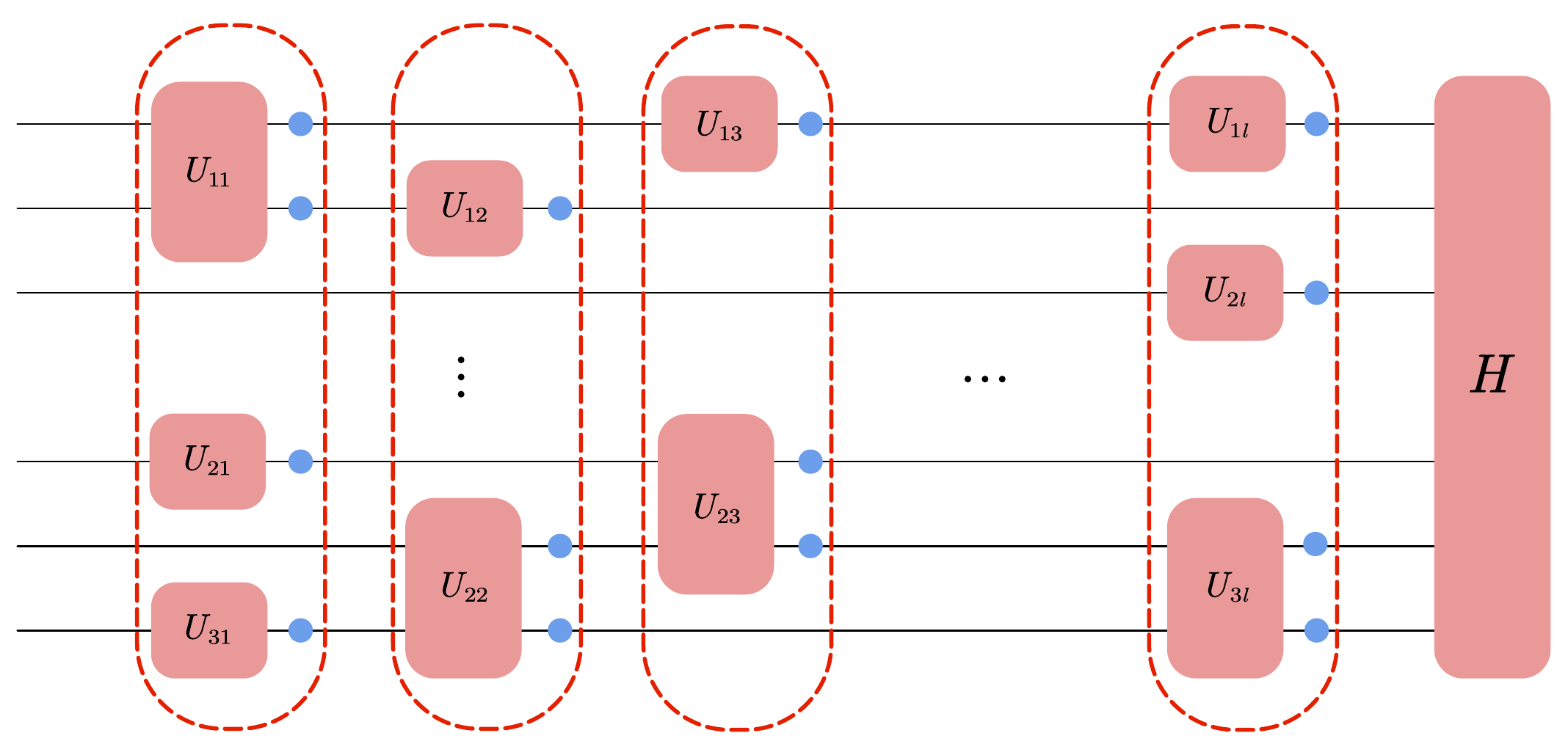}}     
  \caption{Circuit diagram of a noisy quantum circuit with a fixed structure, where the blue points denote single-qubit depolarizing channels.}
  \label{fig2}
 \end{figure}
  We denote such noisy channels as $\Phi_{\epsilon}=\ot D_{\epsilon}\circ \Phi$. To allow for different channels to have different noise parameters, we use $\epsilon_{ij}$ to denote the parameter in the depolarizing channel acting on the $i$-th
 gate of the $j$-th layer, i.e.~ $\Phi_{\epsilon_{ij}}=\ot  D_{\epsilon_{ij}}\circ \Phi_{ij}$.  
This gives us the following relationship between the Rademacher complexity of quantum circuits.

\begin{prop}\label{Prop:main2}
Given a set of quantum circuits $\mathcal{C}^{\mathcal{A}}_{\vec{\mu}_{1,\infty}\leq\vec{\mu} }$ with a fixed structure $\mathcal{A}$ and bounded resources,
and the noisy quantum circuits $\mathcal{C}^{\mathcal{A}, noisy}_{\vec{\mu}_{1,\infty}\leq\vec{\mu} }$, 
the Rademacher complexity on $m$ independent samples $S=\set{\vec{x}_1,\ldots,
\vec{x}_m}$ satisfies the following bound
\begin{align}
 R_S(\mathcal{F}\circ\mathcal{C}^{\mathcal{A}, noisy}_{\vec{\mu}_{1,\infty}\leq\vec{\mu} })
\leq R_S(\mathcal{F}\circ\mathcal{C}^{\mathcal{A}}_{\vec{\mu}_{1,\infty}\leq\vec{\mu}(\vec{\epsilon}) })
\end{align}
where the vector  $\vec{\mu}(\vec{\epsilon}) =(\mu_{ij}(\vec{\epsilon}))_{ij}$ is defined as
\begin{eqnarray}
\mu_{ij}(\vec{\epsilon})\leq 
(1-4\epsilon_{ij})\mu_{ij},
\end{eqnarray}
if $\max_{\Phi:\norm{M^{\Phi}}_{1,\infty}\leq \mu_{ij}}\norm{M^{\Phi_{\epsilon_{ij}}}}_{1,\infty}>1$.
Otherwise
\begin{eqnarray}
\mu_{ij}(\vec{\epsilon})\leq 1.
\end{eqnarray}

\end{prop}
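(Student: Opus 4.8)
The plan is to reduce the statement to a per-gate estimate on the $(1,\infty)$ group norm and then invoke monotonicity of the Rademacher complexity under set inclusion. The central observation is that appending a layer of single-qubit depolarizing channels to the output of a gate $\Phi_{ij}$ acts diagonally in the Pauli basis and can only shrink the representation matrix. Concretely, I would first record how $D_\epsilon$ transforms the single-qubit Paulis: using the conjugation relations $P_a P_b P_a = \pm P_b$, a direct computation gives $D_\epsilon(I)=I$ and $D_\epsilon(P)=(1-4\epsilon)P$ for $P \in \{X,Y,Z\}$. Hence for the tensor product $\otimes D_\epsilon$ acting on the $n_2$ output qubits, each output Pauli $P_{\vec z}$ is an eigenvector with eigenvalue $(1-4\epsilon)^{|\vec z|}$, where $|\vec z|$ denotes the number of non-identity tensor factors of $\vec z$.

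From this I would deduce the entrywise identity $M^{\Phi_{\epsilon_{ij}}}_{\vec z \vec x} = (1-4\epsilon_{ij})^{|\vec z|}\, M^{\Phi_{ij}}_{\vec z \vec x}$, so that passing from $\Phi_{ij}$ to $\Phi_{\epsilon_{ij}}$ merely rescales the $\vec z$-th row of the representation matrix by $(1-4\epsilon_{ij})^{|\vec z|}$. Since $\norm{M}_{1,\infty}=\max_{\vec z}\sum_{\vec x}|M_{\vec z\vec x}|$ is the maximum absolute row sum, the noise leaves the identity row ($\vec z = \vec 0$) untouched and contracts every other row by a factor at most $(1-4\epsilon_{ij})$. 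A short computation using trace preservation shows the identity row of $M^{\Phi_{ij}}$ carries only the single entry $M^{\Phi_{ij}}_{\vec 0\vec 0}$, whose modulus does not exceed $1$ for the circuits under consideration; this is precisely what fixes the threshold value $1$ appearing in the statement.

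The dichotomy then follows. If $\max_{\Phi:\norm{M^{\Phi}}_{1,\infty}\le\mu_{ij}}\norm{M^{\Phi_{\epsilon_{ij}}}}_{1,\infty}>1$, the maximizing row cannot be the identity row (whose $\ell_1$ norm is $\le 1$), so it is some $\vec z$ with $|\vec z|\ge 1$, where the scaling factor is at most $(1-4\epsilon_{ij})$; this yields $\norm{M^{\Phi_{\epsilon_{ij}}}}_{1,\infty}\le (1-4\epsilon_{ij})\norm{M^{\Phi_{ij}}}_{1,\infty}\le(1-4\epsilon_{ij})\mu_{ij}$, i.e.~$\mu_{ij}(\vec\epsilon)\le(1-4\epsilon_{ij})\mu_{ij}$. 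In the complementary case the norm is by assumption at most $1$, so $\mu_{ij}(\vec\epsilon)\le 1$. Applying this gate by gate shows that every noisy circuit in $\mathcal{C}^{\mathcal{A},noisy}_{\vec\mu_{1,\infty}\le\vec\mu}$, with each depolarizing layer absorbed into its preceding gate, is a fixed-structure circuit of type $\mathcal A$ lying in $\mathcal{C}^{\mathcal{A}}_{\vec\mu_{1,\infty}\le\vec\mu(\vec\epsilon)}$, that is $\mathcal{C}^{\mathcal{A},noisy}_{\vec\mu_{1,\infty}\le\vec\mu}\subseteq \mathcal{C}^{\mathcal{A}}_{\vec\mu_{1,\infty}\le\vec\mu(\vec\epsilon)}$.

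Finally, because $R_S$ is defined through a supremum over the circuit class, it is monotone under inclusion of classes, and the claimed bound $R_S(\mathcal{F}\circ\mathcal{C}^{\mathcal{A},noisy}_{\vec\mu_{1,\infty}\le\vec\mu})\le R_S(\mathcal{F}\circ\mathcal{C}^{\mathcal{A}}_{\vec\mu_{1,\infty}\le\vec\mu(\vec\epsilon)})$ is immediate. I expect the main obstacle to lie in the bookkeeping of the middle steps: verifying rigorously that the depolarizing layer acts as a pure row-rescaling map on $M^{\Phi_{ij}}$ in the multi-qubit output case, and pinning down the identity-row norm carefully enough that the threshold value $1$ is justified rather than merely assumed.
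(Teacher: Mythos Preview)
Your proposal is correct and follows essentially the same route as the paper's proof: both compute the row-wise action $M^{\Phi_\epsilon}_{\vec z\vec x}=(1-4\epsilon)^{w(\vec z)}M^{\Phi}_{\vec z\vec x}$ of the depolarizing layer in the Pauli basis, observe that the identity row has $\ell_1$ norm $1$ while every other row is contracted by at least $(1-4\epsilon)$, and then use the resulting per-gate bound together with monotonicity of $R_S$ under class inclusion. The paper simply states ``it suffices to prove'' the per-gate norm inequality and leaves the inclusion/monotonicity step implicit, whereas you spell it out; otherwise the arguments coincide.
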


We provide a proof of this proposition in  Appendix \ref{apen:pro1}. The following corollary follows directly 
from Proposition \ref{Prop:main2} and the results in \cite{bu2021on}.

\begin{cor}
The Rademacher complexity 
of  $\mathcal{C}^{\mathcal{A}, noisy}_{\vec{\mu}_{1,\infty}\leq\vec{\mu} }$ on $m$ independent samples $S=\set{\vec{x}_1,\ldots,
\vec{x}_m}$ satisfies the following bound

 \begin{eqnarray}
R_S(\mathcal{F}\circ\mathcal{C}^{\mathcal{A}, noisy}_{\vec{\mu}_{1,\infty}\leq\vec{\mu} })
\leq \prod_{ij}\mu_{ij}(\vec{\epsilon})\frac{\sqrt{8n_0}}{\sqrt{m}}\norm{\vec{\alpha}}_1\max_i\norm{\vec{f}_I(\vec{x}_i)}_{\infty},\nonumber\\
 \end{eqnarray}
 where 
 $\vec{\alpha}$ and $\vec{f}_I(\vec{x}_i)$  are the representation vectors of $H$ and $\proj{\psi(x_i)}$ in the Pauli basis, respectively. The representation vector $\vec{\alpha}^{Q}$ of a linear operator $Q$ in the Pauli basis is defined as 
 $\alpha^{Q}_{\vec{z}}=\frac{1}{2^n}\Tr{P_{\vec{z}}Q}$.
\end{cor}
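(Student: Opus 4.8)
The plan is to obtain the bound by chaining two inequalities: the reduction provided by Proposition~\ref{Prop:main2} and the explicit noiseless Rademacher-complexity bound established in \cite{bu2021on}. Proposition~\ref{Prop:main2} already does the conceptual work, since it shows that inserting a depolarizing channel after each gate does not increase the Rademacher complexity, and it identifies the noisy circuit class with a \emph{noiseless} one of the same fixed structure $\mathcal{A}$ but with the resource budget rescaled from $\vec{\mu}$ to $\vec{\mu}(\vec{\epsilon})$. Thus the first step is simply to invoke
\begin{eqnarray*}
R_S(\mathcal{F}\circ\mathcal{C}^{\mathcal{A}, noisy}_{\vec{\mu}_{1,\infty}\leq\vec{\mu} })
\leq R_S(\mathcal{F}\circ\mathcal{C}^{\mathcal{A}}_{\vec{\mu}_{1,\infty}\leq\vec{\mu}(\vec{\epsilon}) }),
\end{eqnarray*}
which is exactly the content of Proposition~\ref{Prop:main2}.

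The second step is to recall the explicit upper bound from \cite{bu2021on} for a noiseless quantum circuit with fixed structure $\mathcal{A}$ and a bounded $(1,\infty)$ group-norm resource vector. I would state this as
\begin{eqnarray*}
R_S(\mathcal{F}\circ\mathcal{C}^{\mathcal{A}}_{\vec{\mu}_{1,\infty}\leq\vec{\nu}})
\leq \prod_{ij}\nu_{ij}\,\frac{\sqrt{8n_0}}{\sqrt{m}}\norm{\vec{\alpha}}_1\max_i\norm{\vec{f}_I(\vec{x}_i)}_{\infty},
\end{eqnarray*}
valid for an arbitrary resource bound $\vec{\nu}=(\nu_{ij})_{ij}$, where $\vec{\alpha}$ and $\vec{f}_I(\vec{x}_i)$ are the Pauli representation vectors of $H$ and $\proj{\psi(\vec{x}_i)}$. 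Specializing this to $\vec{\nu}=\vec{\mu}(\vec{\epsilon})$ and composing it with the inequality from the first step immediately yields the claimed bound, with the product $\prod_{ij}\mu_{ij}(\vec{\epsilon})$ appearing in front.

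The only point that needs care is that this specialization is legitimate: I must confirm that the noiseless bound of \cite{bu2021on} holds for a \emph{general} resource vector rather than only for the nominal budget $\vec{\mu}$. Since its right-hand side is a product of the nonnegative factors $\nu_{ij}$ times positive constants, it is monotone nondecreasing in each $\nu_{ij}$, so evaluating it at the shrunk vector $\vec{\mu}(\vec{\epsilon})$---whose entries satisfy $\mu_{ij}(\vec{\epsilon})\le(1-4\epsilon_{ij})\mu_{ij}$, or $\mu_{ij}(\vec{\epsilon})\le 1$ in the saturating case, as specified in Proposition~\ref{Prop:main2}---poses no difficulty. I do not expect a genuine obstacle here; the main thing to check is notational consistency, namely that $n_0$, $\vec{\alpha}$, and $\vec{f}_I(\vec{x}_i)$ carry the same meaning as in \cite{bu2021on} so that the constants transfer verbatim, after which the corollary follows directly.
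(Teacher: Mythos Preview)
Your proposal is correct and matches the paper's approach exactly: the paper states that the corollary ``follows directly from Proposition~\ref{Prop:main2} and the results in \cite{bu2021on},'' which is precisely the two-step chaining you describe. Your added remark about checking monotonicity in the resource vector is a reasonable sanity check but is not needed for the argument, since the bound from \cite{bu2021on} applies to any resource budget $\vec{\nu}$ and can simply be instantiated at $\vec{\nu}=\vec{\mu}(\vec{\epsilon})$.
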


\subsection{Lower bound}

\begin{prop}\label{prop:main3}
Consider $m$ independent samples $S=\set{\vec{z}_1,\ldots,\vec{z}_m}$. If the noisy channel is
$\mathcal{C}$-compatible,   then
the following relationship holds.
\begin{eqnarray}
R_S\left(\mathcal{F}\circ \mathcal{C}^{(k+1)}_{\text{noisy}}\right)
\geq \left(1-2\sum_jp_j\right)R_S\left(\mathcal{F}\circ\mathcal{C}^{(k)}_{\text{noisy}}\right).
\label{eq:lowerbound}
\end{eqnarray} 
\end{prop}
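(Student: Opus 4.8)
The plan is to reduce the claim to the insertion of a single extra noise channel and then to \emph{invert} the convex decomposition guaranteed by $\mathcal{C}$-compatibility, rather than expand it (the latter yields the upper bound of the earlier proposition). Fix a Rademacher string $\vec{\epsilon}$, and for any circuit class $\mathcal{G}$ abbreviate $g_{\vec{\epsilon}}(\mathcal{G}):=\sup_{C\in\mathcal{G}}\abs{\sum_l\epsilon_l f_C(\vec{x}_l)}$, so that $R_S(\mathcal{F}\circ\mathcal{G})=\frac{1}{m}\mathbb{E}_{\vec{\epsilon}}\,g_{\vec{\epsilon}}(\mathcal{G})$. I would take an arbitrary $C^{(k)}\in\mathcal{C}^{(k)}_{\text{noisy}}$ and let $\tilde{C}=D\circ C^{(k)}$ be the circuit obtained by inserting the noise channel $D$ on the designated wire; by construction $\tilde{C}\in\mathcal{C}^{(k+1)}_{\text{noisy}}$.

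Because $D$ is $\mathcal{C}$-compatible, writing $p_0=1-\sum_i p_i$ I can decompose $\tilde{C}=p_0\,C^{(k)}+\sum_i p_i\,\tilde{C}_i$, where each $\tilde{C}_i$ replaces the inserted wire by the unitary $U_i$ and hence still carries exactly $k$ noise channels, i.e.~$\tilde{C}_i\in\mathcal{C}^{(k)}_{\text{noisy}}$. Since $f_C(\vec{x})=\Tr{C(\proj{\psi(\vec{x})})H}$ is linear in the channel $C$, this lifts to $f_{\tilde{C}}=p_0\,f_{C^{(k)}}+\sum_i p_i\,f_{\tilde{C}_i}$. The crucial move---and the step I expect to be the main obstacle---is to solve instead for the less-noisy function, $f_{C^{(k)}}=p_0^{-1}\bigl(f_{\tilde{C}}-\sum_i p_i\,f_{\tilde{C}_i}\bigr)$, so that the \emph{target} class $\mathcal{C}^{(k)}_{\text{noisy}}$ is bounded in terms of the richer class $\mathcal{C}^{(k+1)}_{\text{noisy}}$.

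Feeding this into the Rademacher sum and applying the triangle inequality gives, for each fixed $C^{(k)}$, the bound $\abs{\sum_l\epsilon_l f_{C^{(k)}}(\vec{x}_l)}\leq p_0^{-1}\bigl(\abs{\sum_l\epsilon_l f_{\tilde{C}}(\vec{x}_l)}+\sum_i p_i\abs{\sum_l\epsilon_l f_{\tilde{C}_i}(\vec{x}_l)}\bigr)$. I would then take the supremum over $C^{(k)}\in\mathcal{C}^{(k)}_{\text{noisy}}$ on both sides, using subadditivity of the supremum together with the two membership facts: $\tilde{C}\in\mathcal{C}^{(k+1)}_{\text{noisy}}$ bounds the first term by $g_{\vec{\epsilon}}(\mathcal{C}^{(k+1)}_{\text{noisy}})$, while each $\tilde{C}_i\in\mathcal{C}^{(k)}_{\text{noisy}}$ bounds the second term by $(\sum_i p_i)\,g_{\vec{\epsilon}}(\mathcal{C}^{(k)}_{\text{noisy}})$. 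This yields $p_0\,g_{\vec{\epsilon}}(\mathcal{C}^{(k)}_{\text{noisy}})\leq g_{\vec{\epsilon}}(\mathcal{C}^{(k+1)}_{\text{noisy}})+(\sum_i p_i)\,g_{\vec{\epsilon}}(\mathcal{C}^{(k)}_{\text{noisy}})$.

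Rearranging and substituting $p_0=1-\sum_i p_i$ produces $g_{\vec{\epsilon}}(\mathcal{C}^{(k+1)}_{\text{noisy}})\geq(1-2\sum_i p_i)\,g_{\vec{\epsilon}}(\mathcal{C}^{(k)}_{\text{noisy}})$, the factor of two arising precisely because the $\sum_i p_i$ penalty is subtracted from $p_0=1-\sum_i p_i$. Finally, taking $\mathbb{E}_{\vec{\epsilon}}$ of both sides and dividing by $m$ delivers Eq.~\eqref{eq:lowerbound}; the standing hypothesis $\sum_i p_i\leq 1/2$ is exactly what guarantees the prefactor is nonnegative, so that the lower bound is meaningful.
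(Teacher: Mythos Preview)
Your argument is correct and is essentially the paper's own proof, just organized in the reverse direction: the paper starts from an arbitrary $C\in\mathcal{C}^{(k+1)}_{\text{noisy}}$, expands the extra noise channel, and applies the reverse triangle inequality $\abs{\sum_l\epsilon_l f_C}\geq p_0\abs{\sum_l\epsilon_l f_{C_0}}-\sum_j p_j\sup_{C_j}\abs{\sum_l\epsilon_l f_{C_j}}$, whereas you start from $C^{(k)}\in\mathcal{C}^{(k)}_{\text{noisy}}$, solve the same identity for $f_{C^{(k)}}$, and use the forward triangle inequality. After taking suprema and rearranging, the two computations coincide line for line; your version has the mild advantage of making explicit why every element of $\mathcal{C}^{(k)}_{\text{noisy}}$ is reached (you construct $\tilde{C}$ from a given $C^{(k)}$), a step the paper leaves implicit.
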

We provide a proof of this proposition in Appendix \ref{apen:main3}.
Moreover, we can get a better lower bound on the Rademacher complexity of 
noisy quantum circuits by introducing the free-robustness.

\begin{thm}\label{thm:main4}
Given $m$ independent samples $S=\set{\vec{z}_1,\ldots,\vec{z}_m}$, 
the following relationship holds
\begin{eqnarray}
R_S\left(\mathcal{F}\circ\mathcal{C}^{(k+1)}_{\text{noisy}}\right)
\geq \left(1+2\gamma_{k,k+1}\right)^{-1} R_S\left(\mathcal{F}\circ\mathcal{C}^{(k)}_{\text{noisy}}\right),
\end{eqnarray} 
where $\gamma_{k,k+1}$ is defined as

\begin{align}
\nonumber\gamma_{k,k+1}=\max_{C_{k}\in\mathcal{C}^{(k)}_{\text{noisy}} }\min\bigg\{\lambda &\bigg| \frac{C_{k}+\lambda C_{k+1}}{1+\lambda}\in \Conv(\mathcal{C}^{(k+1)}_{\text{noisy}}), \\
&\qquad \left.
C_{k+1}\in \Conv(\mathcal{C}^{(k+1)}_{\text{noisy}})\right\}.
\end{align}
\end{thm}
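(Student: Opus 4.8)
The plan is to reduce Theorem~\ref{thm:main4} to the same convexity-plus-scaling argument that underlies Proposition~\ref{prop:main3}, but with the crude factor $1-2\sum_j p_j$ replaced by the sharper quantity $\gamma_{k,k+1}$ coming from the free robustness. The central idea is that $\gamma_{k,k+1}$ measures, in the worst case over $C_k \in \mathcal{C}^{(k)}_{\text{noisy}}$, how much weight $\lambda$ must be added to push $C_k$ into the convex hull of the noisier class: by definition of $\gamma_{k,k+1}$, for every $C_k$ there exist $C_{k+1}, C_{k+1}' \in \Conv(\mathcal{C}^{(k+1)}_{\text{noisy}})$ with $(C_k + \gamma_{k,k+1}\, C_{k+1})/(1+\gamma_{k,k+1}) = C_{k+1}'$. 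Rearranging this identity expresses $C_k$ as an affine combination $C_k = (1+\gamma_{k,k+1}) C_{k+1}' - \gamma_{k,k+1}\, C_{k+1}$ of two elements of $\Conv(\mathcal{C}^{(k+1)}_{\text{noisy}})$.

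First I would unwind the definition of $R_S$ and use the defining property of the supremum: fix $\eta>0$ and pick a near-optimal $C_k^\star \in \mathcal{C}^{(k)}_{\text{noisy}}$ realizing the supremum in $R_S(\mathcal{F}\circ\mathcal{C}^{(k)}_{\text{noisy}})$ up to $\eta$. Applying the affine decomposition above to $C_k^\star$ and using linearity of the map $C \mapsto f_C(\vec{x}_i) = \Tr{C(\proj{\psi(\vec{x}_i)})H}$ in the channel $C$, I would write $f_{C_k^\star}(\vec{x}_i) = (1+\gamma_{k,k+1}) f_{C_{k+1}'}(\vec{x}_i) - \gamma_{k,k+1}\, f_{C_{k+1}}(\vec{x}_i)$. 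Substituting into $\frac{1}{m}\left|\sum_i \epsilon_i f_{C_k^\star}(\vec{x}_i)\right|$ and applying the triangle inequality gives
\begin{align*}
\frac{1}{m}\left|\sum_i \epsilon_i f_{C_k^\star}(\vec{x}_i)\right|
&\leq (1+\gamma_{k,k+1})\frac{1}{m}\left|\sum_i \epsilon_i f_{C_{k+1}'}(\vec{x}_i)\right| \\
&\quad + \gamma_{k,k+1}\frac{1}{m}\left|\sum_i \epsilon_i f_{C_{k+1}}(\vec{x}_i)\right|.
\end{align*}
Both terms on the right involve elements of $\Conv(\mathcal{C}^{(k+1)}_{\text{noisy}})$, so after taking the supremum over all such elements and then the expectation over $\vec{\epsilon}$, each is bounded by $R_S(\mathcal{F}\circ\Conv(\mathcal{C}^{(k+1)}_{\text{noisy}}))$. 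Invoking the invariance of Rademacher complexity under convex hulls (used already in the proof of Proposition~\ref{prop:main2}), this equals $R_S(\mathcal{F}\circ\mathcal{C}^{(k+1)}_{\text{noisy}})$. Collecting the coefficients yields $R_S(\mathcal{F}\circ\mathcal{C}^{(k)}_{\text{noisy}}) - \eta \leq (1 + 2\gamma_{k,k+1}) R_S(\mathcal{F}\circ\mathcal{C}^{(k+1)}_{\text{noisy}})$, and letting $\eta \to 0$ and rearranging gives the claim.

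I expect the main obstacle to be handling the $\max$ over $C_k$ and the two nested convex hulls cleanly: because $\gamma_{k,k+1}$ is defined as a worst-case over $C_k \in \mathcal{C}^{(k)}_{\text{noisy}}$, I must verify that the value of $\lambda = \gamma_{k,k+1}$ is simultaneously admissible for the particular near-optimizer $C_k^\star$, which is exactly what the $\max$ guarantees. A secondary subtlety is that the supremum defining $R_S(\mathcal{F}\circ\mathcal{C}^{(k)}_{\text{noisy}})$ ranges over $\mathcal{C}^{(k)}_{\text{noisy}}$ whereas the decomposition naturally lives in its convex hull; I would note, as in the earlier propositions, that passing to $\Conv$ does not change the Rademacher complexity, so the optimizer may be taken in the extreme set without loss. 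Finally, I would confirm that the absolute value inside $R_S$ is compatible with the triangle-inequality step — it is, since the bound $|a+b| \le |a| + |b|$ is applied before, not after, taking suprema, and the two resulting suprema are each over the same convex hull. Once these bookkeeping points are settled, the estimate is a direct computation.
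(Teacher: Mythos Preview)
Your proposal is correct and follows essentially the same route as the paper: rewrite the defining identity for $\gamma_{k,k+1}$ as $C_k = (1+\gamma_{k,k+1})C_{k+1}' - \gamma_{k,k+1}C_{k+1}$ with $C_{k+1},C_{k+1}' \in \Conv(\mathcal{C}^{(k+1)}_{\text{noisy}})$, pass to $f_C$ by linearity, apply the triangle inequality, bound each piece by the supremum over $\Conv(\mathcal{C}^{(k+1)}_{\text{noisy}})$, and finish with the convex-hull invariance of Rademacher complexity. The only cosmetic wrinkle is that your near-optimizer $C_k^\star$ should be chosen \emph{after} fixing $\vec{\epsilon}$ (since the supremum sits inside the expectation in the definition of $R_S$); your later phrasing ``taking the supremum \ldots\ and then the expectation over $\vec{\epsilon}$'' shows you have the right order in mind, so just make that dependence explicit when you write it up.
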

We provide a proof of this theorem in Appendix \ref{apen:main4}. If the noise channel has a recovery map $\mathcal{E}_R(\cdot)=\sum_iv_iU_i(\cdot) U^\dag_i$, i.e., 
$\mathcal{E}_R\circ D=id$, and each $U_i(\cdot)U_i^\dag$ is $\mathcal{C}$-compatible, then we can define the $l_1$ norm $\norm{\vec{v}(\mathcal{E}_R)}_1$ as follows
\begin{align}\label{eq:recov_dec}
\norm{\vec{v}(\mathcal{E}_R)}_1
=\min &\Big\{\sum_i|v_i| : \mathcal{E}_R(\cdot)=\sum_iv_iU_i(\cdot)U^\dag_i,
\nonumber\\
&\qquad \text{$U_i(\cdot)U_i^\dag$ is $ \mathcal{C}$-compatible}\Big\},
\end{align}
where the vector $\vec{v}$ is the representation vector in $\mathcal{C}$.
Note that $\norm{\vec{v}(\mathcal{E}_R)}_1$ can provide a lower bound for 
the robustness $\gamma_{k,k+1}$.

\begin{prop}\label{prop:rec1}
If the noisy channel has a recovery map $\mathcal{E}_R$, then we have 
\begin{align}
\gamma_{k,k+1}\leq (\norm{\vec{v}(\mathcal{E}_R)}_1-1)/2,
\end{align}
if $\norm{\vec{v}(\mathcal{E}_R)}_1<\infty$, which means that there exists a decomposition of the form stated in \eqref{eq:recov_dec}.

\end{prop}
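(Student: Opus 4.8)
The plan is to exhibit, for every $C_k \in \mathcal{C}^{(k)}_{\text{noisy}}$, an explicit affine decomposition witnessing that $\lambda = (\norm{\vec{v}(\mathcal{E}_R)}_1 - 1)/2$ is feasible in the minimization defining $\gamma_{k,k+1}$. Since $\gamma_{k,k+1}$ is the maximum over $C_k$ of the minimal feasible $\lambda$, such feasibility, holding uniformly in $C_k$, immediately yields the claimed bound.

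First I would fix a decomposition $\mathcal{E}_R(\cdot) = \sum_i v_i U_i(\cdot) U_i^\dag$ attaining the minimum in \eqref{eq:recov_dec}, so that $\sum_i |v_i| = \norm{\vec{v}(\mathcal{E}_R)}_1$ and each $U_i(\cdot)U_i^\dag$ is $\mathcal{C}$-compatible. Because $\mathcal{E}_R$ and $D$ are trace preserving and $\mathcal{E}_R \circ D = \mathrm{id}$, evaluating traces on $\mathcal{E}_R(D(\rho)) = \rho$ forces $\sum_i v_i = 1$. Splitting the sum into positive and negative parts, I would write $\mathcal{E}_R = V_+ \mathcal{E}_+ - V_- \mathcal{E}_-$, where $\mathcal{E}_\pm$ are convex combinations of the $\mathcal{C}$-compatible unitary channels and $V_+ = \sum_{v_i>0} v_i$, $V_- = \sum_{v_i<0}|v_i|$. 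The two identities $V_+ - V_- = 1$ and $V_+ + V_- = \norm{\vec{v}(\mathcal{E}_R)}_1$ then give $V_- = (\norm{\vec{v}(\mathcal{E}_R)}_1 - 1)/2$, which is exactly the target value of $\lambda$.

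Next, for an arbitrary $C_k \in \mathcal{C}^{(k)}_{\text{noisy}}$ I would pick a wire location $W$ at which to introduce the extra noise and insert the identity there in the form $\mathrm{id} = \mathcal{E}_R \circ D$; since this composite acts as the identity the circuit is unchanged and still equals $C_k$. Substituting the split of $\mathcal{E}_R$ and using linearity of circuit composition in the inserted channel, I obtain $C_k = V_+ \tilde C_{k+1} - V_- C_{k+1}$, where $\tilde C_{k+1}$ and $C_{k+1}$ arise from inserting $\mathcal{E}_+ \circ D$ and $\mathcal{E}_- \circ D$, respectively, at $W$. The crucial step is to check that both lie in $\Conv(\mathcal{C}^{(k+1)}_{\text{noisy}})$: the factor $D$ supplies the $(k+1)$-th occurrence of the noise channel, while each unitary $U_i$ appearing in $\mathcal{E}_\pm$ is $\mathcal{C}$-compatible, so absorbing $U_i$ into the ideal part of the circuit leaves the underlying noiseless circuit in $\mathcal{C}$; hence each term of the resulting convex combination is a genuine element of $\mathcal{C}^{(k+1)}_{\text{noisy}}$.

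Finally, with $V_+ = 1 + V_-$, rearranging $C_k = (1+V_-)\tilde C_{k+1} - V_- C_{k+1}$ gives $\frac{C_k + V_- C_{k+1}}{1+V_-} = \tilde C_{k+1} \in \Conv(\mathcal{C}^{(k+1)}_{\text{noisy}})$ with $C_{k+1} \in \Conv(\mathcal{C}^{(k+1)}_{\text{noisy}})$, so $\lambda = V_-$ is feasible for this $C_k$; taking the maximum over $C_k$ yields $\gamma_{k,k+1} \le V_- = (\norm{\vec{v}(\mathcal{E}_R)}_1 - 1)/2$. I expect the main obstacle to be the bookkeeping in the crucial step above---verifying, via $\mathcal{C}$-compatibility, that inserting $\mathcal{E}_\pm \circ D$ genuinely produces convex combinations of circuits with exactly $k+1$ noise channels whose ideal parts remain in $\mathcal{C}$---together with the clean identification of the positive and negative weights with $(\norm{\vec{v}(\mathcal{E}_R)}_1 \pm 1)/2$.
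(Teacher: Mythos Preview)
Your proposal is correct and follows essentially the same approach as the paper: insert $\mathrm{id}=\mathcal{E}_R\circ D$ on a wire of $C_k$, expand $\mathcal{E}_R$ in its $\mathcal{C}$-compatible unitary decomposition, and read off the resulting affine combination of elements of $\Conv(\mathcal{C}^{(k+1)}_{\text{noisy}})$. The paper's own proof is much terser and simply asserts the final inequality $1+2\gamma_{k,k+1}\le \norm{\vec{v}(\mathcal{E}_R)}_1$ once the linear combination is obtained; your explicit splitting into positive and negative parts, together with the identification $V_\pm=(\norm{\vec{v}(\mathcal{E}_R)}_1\pm1)/2$ via $\sum_i v_i=1$, supplies exactly the bookkeeping the paper leaves implicit.
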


\begin{proof}
By the definition of recovery map, we have 
$\mathcal{E}_{R}\circ D=\mathrm{id}$. Thus, for any quantum circuit $C_{k}\in\mathcal{C}^{(k)}_{\text{noisy}}$, there exists
a quantum circuit $C_{k+1}\in\mathcal{C}^{(k+1)}_{\text{noisy}}$ such that $C_{k}$ can be obtained from  $C_{k+1}$
by adding recovery maps $\mathcal{E}_R$ into  $C_{k+1}$, which implies that 
 $C_{k}$ can be written as a linear combination of the quantum circuits in $\mathcal{C}^{(k+1)}_{\text{noisy}}$.
 Therefore $1+2\gamma_{k,k+1}\leq \norm{\vec{v}(\mathcal{E}_R)}_1$.
\end{proof}

Hence, we obtain the following statement directly from Theorem \ref{thm:main4} and Proposition \ref{prop:rec1}.

\begin{prop}\label{prop:cor6}
Given $m$ independent samples $S=\set{\vec{z}_1,\ldots,\vec{z}_m}$, if the noisy channel has a recovery map $\mathcal{E}_R$, then we have 
the following relationship
\begin{align}
R_S\left(\mathcal{F}\circ\mathcal{C}^{(k+1)}_{\text{noisy}}\right)
\geq \norm{\vec{v}(\mathcal{E}_R)}^{-1}_1 R_S\left(\mathcal{F}\circ\mathcal{C}^{(k)}_{\text{noisy}}\right).
\end{align} 
\end{prop}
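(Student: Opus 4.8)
The plan is to simply chain together the two results that immediately precede this statement, since the proposition is advertised as following directly from Theorem~\ref{thm:main4} and Proposition~\ref{prop:rec1}; the only real work is to track the directions of the inequalities carefully. First I would invoke Theorem~\ref{thm:main4}, which gives
\begin{align}
R_S\left(\mathcal{F}\circ\mathcal{C}^{(k+1)}_{\text{noisy}}\right)
\geq \left(1+2\gamma_{k,k+1}\right)^{-1} R_S\left(\mathcal{F}\circ\mathcal{C}^{(k)}_{\text{noisy}}\right),
\end{align}
and then substitute the upper bound on the robustness supplied by Proposition~\ref{prop:rec1}, namely $\gamma_{k,k+1}\leq(\norm{\vec{v}(\mathcal{E}_R)}_1-1)/2$, which rearranges to $1+2\gamma_{k,k+1}\leq\norm{\vec{v}(\mathcal{E}_R)}_1$.

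The key observation is that both $1+2\gamma_{k,k+1}$ and $\norm{\vec{v}(\mathcal{E}_R)}_1$ are strictly positive: since $\gamma_{k,k+1}$ is a minimum taken over nonnegative mixing parameters $\lambda$, we have $\gamma_{k,k+1}\geq 0$, whence $1+2\gamma_{k,k+1}\geq 1$ and consequently $\norm{\vec{v}(\mathcal{E}_R)}_1\geq 1$ as well. Taking reciprocals therefore reverses the inequality to yield $\left(1+2\gamma_{k,k+1}\right)^{-1}\geq\norm{\vec{v}(\mathcal{E}_R)}_1^{-1}$. Because the Rademacher complexity $R_S(\mathcal{F}\circ\mathcal{C}^{(k)}_{\text{noisy}})$ is nonnegative, I can multiply this reciprocal inequality through by it without flipping the sign, and then chain the outcome with the bound from Theorem~\ref{thm:main4} to arrive at the claimed estimate.

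The main (and essentially only) obstacle is bookkeeping: one must verify positivity before inverting and nonnegativity of $R_S$ before multiplying, so that neither operation flips the inequality. I would also flag the degenerate case for completeness: if no finite decomposition of the form \eqref{eq:recov_dec} exists, then $\norm{\vec{v}(\mathcal{E}_R)}_1=\infty$, the right-hand side of the claim is zero, and the bound holds trivially; all the genuine content is confined to the regime $\norm{\vec{v}(\mathcal{E}_R)}_1<\infty$, which is precisely the hypothesis under which Proposition~\ref{prop:rec1} applies.
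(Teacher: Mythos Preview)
Your proposal is correct and matches the paper's approach exactly: the paper states that Proposition~\ref{prop:cor6} follows directly from Theorem~\ref{thm:main4} and Proposition~\ref{prop:rec1} without giving further detail, and you have simply spelled out that chaining, together with the positivity checks needed to justify inverting and multiplying through the inequalities.
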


Now, let us give some examples of noise channels to compare the bounds given by Propositions \ref{prop:main3} and \ref{prop:cor6}.

\noindent\textit{\textbf{Example 1}} (Depolarizing channel)
For the depolarizing channel $D_\epsilon$ defined 
in \eqref{eq:depC}, we have 
$1-2\sum_jp_j=1-6\epsilon$.
The depolarizing channel has a recovery map $\mathcal{E}_R$  \cite{Temme2017,Takagi2020}, which can be written as
\begin{eqnarray*}
\mathcal{E}_R(\rho)
=\left(1+\frac{3\epsilon}{1-4\epsilon}\right)
\rho-\frac{\epsilon}{1-4\epsilon}(X\rho X+Y\rho Y+Z\rho Z).
\end{eqnarray*}
Hence
we have
\begin{align}
(1+2\gamma_{k,k+1})^{-1}
&\geq \norm{\vec{v}(\mathcal{E}_R)}^{-1}
\geq \left(\frac{1+2\epsilon}{1-4\epsilon}\right)^{-1}
\nonumber>1-6\epsilon.
\end{align}
This tells us that the bound from  Proposition \ref{prop:cor6}
is better than that from Proposition \ref{prop:main3}.

\noindent\textit{\textbf{Example 2}} (Dephasing channel)
For the dephasing channel $D^{P}_{\epsilon}$ defined
as 
\begin{eqnarray}
D^{P}_{\epsilon}(\rho)
=(1-\epsilon)\rho
+\epsilon Z\rho Z,
\end{eqnarray}
we have $1-2\sum_jp_j=1-2\epsilon$. 
Its recovery map \cite{Takagi2020} is given by
\begin{eqnarray}
\mathcal{E}^P_{R}(\rho)
=\frac{1-\epsilon}{1-2\epsilon}
\rho-\frac{\epsilon}{1-2\epsilon}Z\rho Z.
\end{eqnarray}
Hence $\norm{\vec{v}(\mathcal{E}_R)}^{-1}\geq 1-2\epsilon$, i.e.~the bound 
from Proposition \ref{prop:cor6}
is the same as that from Proposition \ref{prop:main3}.

Using Theorem \ref{thm:main4} and Proposition \ref{prop:cor6}, we can also provide a 
lower bound on the Rademacher complexity of noisy quantum circuits with fixed structure $A$
and bounded resource
$\mathcal{C}^{\mathcal{A}, noisy}_{\vec{\mu}_{1,\infty}\leq\vec{\mu} }$
that depends on that of the noiseless quantum circuits $\mathcal{C}^{\mathcal{A}}_{\vec{\mu}_{1,\infty}\leq\vec{\mu} }$
and noise parameters.
\begin{cor}
Given a set of quantum circuits $\mathcal{C}^{\mathcal{A}}_{\vec{\mu}_{1,\infty}\leq\vec{\mu} }$ with a fixed structure $\mathcal{A}$ and bounded resource, 
and the noisy quantum circuits $\mathcal{C}^{\mathcal{A}, noisy}_{\vec{\mu}_{1,\infty}\leq\vec{\mu} }$, 
the Rademacher complexity on $m$ independent samples $S=\set{\vec{x}_1,\ldots,
\vec{x}_m}$ satisfies the following bound

\begin{eqnarray}
\prod_{ij} \left(\frac{1-4\epsilon_{ij}}{1+2\epsilon_{ij}}\right)^{\ot n_{ij}} 
R_S(\mathcal{F}\circ\mathcal{C}^{\mathcal{A}}_{\vec{\mu}_{1,\infty}\leq\vec{\mu} })
\leq R_S(\mathcal{F}\circ\mathcal{C}^{\mathcal{A}, noisy}_{\vec{\mu}_{1,\infty}\leq\vec{\mu} }),\nonumber\\
\end{eqnarray}
where $n_{ij}$ is the number of output qubits of the quantum channel $\Phi_{ij}$.
\end{cor}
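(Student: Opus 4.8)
The plan is to reduce the statement to an iterated application of Proposition \ref{prop:cor6}, inserting one single-qubit depolarizing channel at a time. First I would observe that the noise acting after the gate $\Phi_{ij}$ is the tensor product $\ot^{n_{ij}} D_{\epsilon_{ij}}$ of $n_{ij}$ single-qubit depolarizing channels, one on each output qubit. Consequently the fully noisy class $\cC^{\cA,\,noisy}_{\vec{\mu}_{1,\infty}\leq\vec{\mu}}$ is obtained from the noiseless class $\cC^{\cA}_{\vec{\mu}_{1,\infty}\leq\vec{\mu}}$ by inserting a total of $K=\sum_{ij} n_{ij}$ single-qubit depolarizing channels. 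Writing $\cC^{(0)}=\cC^{\cA}_{\vec{\mu}_{1,\infty}\leq\vec{\mu}}$ and $\cC^{(K)}=\cC^{\cA,\,noisy}_{\vec{\mu}_{1,\infty}\leq\vec{\mu}}$, I would interpolate through a chain of partially noisy classes $\cC^{(0)}\subset\cC^{(1)}\subset\cdots\subset\cC^{(K)}$ obtained by switching on the noise one qubit at a time.

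Next I would supply, for each insertion step, the relevant recovery map and its $l_1$ norm. From Example 1, the single-qubit depolarizing channel $D_{\epsilon}$ admits a recovery map $\mathcal{E}_R$ with $\norm{\vec{v}(\mathcal{E}_R)}_1=\tfrac{1+2\epsilon}{1-4\epsilon}$, and this map is a real linear combination of the identity superoperator and the Pauli unitary channels $X(\cdot)X,\,Y(\cdot)Y,\,Z(\cdot)Z$. The key compatibility check is that each such Pauli unitary channel is $\cC^{\cA}_{\vec{\mu}_{1,\infty}\leq\vec{\mu}}$-compatible: inserting a Pauli $P$ on a wire composes the affected gate with $P(\cdot)P^\dag$, whose representation matrix in the Pauli basis is a signed permutation matrix. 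Since multiplying $M^{\Phi_{ij}}$ by a signed permutation merely permutes rows or columns and flips signs of entries, the $(1,\infty)$ group norm $\norm{M^{\Phi_{ij}}}_{1,\infty}=\max_{\vec z}\sum_{\vec x}\abs{M^{\Phi_{ij}}_{\vec z\vec x}}$ is left unchanged. Hence the structure $\cA$ and every bound $\mu_{ij}$ are preserved, the modified circuit remains in the class, and Proposition \ref{prop:cor6} applies at each step with factor $\norm{\vec{v}(\mathcal{E}_R)}_1^{-1}=\tfrac{1-4\epsilon_{ij}}{1+2\epsilon_{ij}}$.

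Then I would simply iterate. Applying Proposition \ref{prop:cor6} once for each of the $n_{ij}$ qubits carrying noise after $\Phi_{ij}$, and then ranging over all gates $ij$ in the structure, multiplies the Rademacher complexity by $\prod_{ij}\Br{\tfrac{1-4\epsilon_{ij}}{1+2\epsilon_{ij}}}^{n_{ij}}$, where the exponent $n_{ij}$ records the tensor product over the $n_{ij}$ output qubits (this is what the notation $\ot n_{ij}$ in the statement denotes). This yields
\[
R_S(\cF\circ\cC^{(K)})\;\geq\;\prod_{ij}\Br{\frac{1-4\epsilon_{ij}}{1+2\epsilon_{ij}}}^{n_{ij}}\,R_S(\cF\circ\cC^{(0)}),
\]
which is precisely the claimed bound once $\cC^{(0)}$ and $\cC^{(K)}$ are identified with the noiseless and fully noisy classes. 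Equivalently, one may treat the whole $\ot^{n_{ij}} D_{\epsilon_{ij}}$ as a single noise channel whose recovery map is $\ot^{n_{ij}}\mathcal{E}_R$; since the $l_1$ norm of the coefficient vector is multiplicative under the tensor product of Pauli decompositions (tensor products of Paulis are again Paulis, hence compatible), one gets $\norm{\vec{v}}_1=\Br{\tfrac{1+2\epsilon_{ij}}{1-4\epsilon_{ij}}}^{n_{ij}}$ and applies Proposition \ref{prop:cor6} once per gate.

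I expect the main obstacle to be the compatibility verification in the second step: one must ensure that inserting the Pauli unitaries appearing in the recovery map keeps the circuit inside the bounded-resource class $\cC^{\cA}_{\vec{\mu}_{1,\infty}\leq\vec{\mu}}$, which relies on the invariance of the $(1,\infty)$ group norm under composition with signed-permutation (Pauli) channels. A secondary technical point is confirming the multiplicativity of the recovery-map $l_1$ norm under tensor products, so that the per-gate contributions combine into the stated product.
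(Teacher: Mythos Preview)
Your proposal is correct and is essentially the approach the paper has in mind: the corollary is stated as a direct consequence of Theorem~\ref{thm:main4} via Proposition~\ref{prop:cor6}, together with the recovery-map $l_1$ norm $\tfrac{1+2\epsilon}{1-4\epsilon}$ computed in Example~1, applied once per output qubit of each gate. Your explicit verification that Pauli conjugations preserve $\norm{M^{\Phi_{ij}}}_{1,\infty}$ (so that the bounded-resource class and fixed structure are preserved, ensuring $\mathcal{C}$-compatibility) is exactly the detail the paper leaves implicit.
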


\section{Concluding remarks}
In this study, we investigated how noise affects the Rademacher complexity of quantum circuits by considering noise models represented by mixed-unitary channels and computing lower and upper bounds for the Rademacher complexities of quantum circuits described by these noise models. More specifically, we found a lower bound for the Rademacher complexity of noisy quantum circuits that depends on the Rademacher complexity of the corresponding noiseless quantum circuit and the free robustness of the circuit. From our results, we see that noise decreases the Rademacher complexity of quantum circuits.

\begin{acknowledgments}
K. B. thanks Arthur Jaffe  and Zhengwei Liu for the help and support during the breakout of the COVID-19 pandemic. 
K. B. acknowledges the support of
 ARO Grants W911NF-19-1-0302 and
W911NF-20-1-0082, and the support from Yau Mathematical
Science Center at Tsinghua University during the visit. 

\end{acknowledgments}

 \bibliography{SatCom-lit}

\begin{thebibliography}{37}%
\makeatletter
\providecommand \@ifxundefined [1]{%
 \@ifx{#1\undefined}
}%
\providecommand \@ifnum [1]{%
 \ifnum #1\expandafter \@firstoftwo
 \else \expandafter \@secondoftwo
 \fi
}%
\providecommand \@ifx [1]{%
 \ifx #1\expandafter \@firstoftwo
 \else \expandafter \@secondoftwo
 \fi
}%
\providecommand \natexlab [1]{#1}%
\providecommand \enquote  [1]{``#1''}%
\providecommand \bibnamefont  [1]{#1}%
\providecommand \bibfnamefont [1]{#1}%
\providecommand \citenamefont [1]{#1}%
\providecommand \href@noop [0]{\@secondoftwo}%
\providecommand \href [0]{\begingroup \@sanitize@url \@href}%
\providecommand \@href[1]{\@@startlink{#1}\@@href}%
\providecommand \@@href[1]{\endgroup#1\@@endlink}%
\providecommand \@sanitize@url [0]{\catcode `\\12\catcode `\$12\catcode
  `\&12\catcode `\#12\catcode `\^12\catcode `\_12\catcode `\%12\relax}%
\providecommand \@@startlink[1]{}%
\providecommand \@@endlink[0]{}%
\providecommand \url  [0]{\begingroup\@sanitize@url \@url }%
\providecommand \@url [1]{\endgroup\@href {#1}{\urlprefix }}%
\providecommand \urlprefix  [0]{URL }%
\providecommand \Eprint [0]{\href }%
\providecommand \doibase [0]{http://dx.doi.org/}%
\providecommand \selectlanguage [0]{\@gobble}%
\providecommand \bibinfo  [0]{\@secondoftwo}%
\providecommand \bibfield  [0]{\@secondoftwo}%
\providecommand \translation [1]{[#1]}%
\providecommand \BibitemOpen [0]{}%
\providecommand \bibitemStop [0]{}%
\providecommand \bibitemNoStop [0]{.\EOS\space}%
\providecommand \EOS [0]{\spacefactor3000\relax}%
\providecommand \BibitemShut  [1]{\csname bibitem#1\endcsname}%
\let\auto@bib@innerbib\@empty
\bibitem [{\citenamefont {Biamonte}\ \emph {et~al.}(2017)\citenamefont
  {Biamonte}, \citenamefont {Wittek}, \citenamefont {Pancotti}, \citenamefont
  {Rebentrost}, \citenamefont {Wiebe},\ and\ \citenamefont
  {Lloyd}}]{BiamonteNature17}%
  \BibitemOpen
  \bibfield  {author} {\bibinfo {author} {\bibfnamefont {Jacob}\ \bibnamefont
  {Biamonte}}, \bibinfo {author} {\bibfnamefont {Peter}\ \bibnamefont
  {Wittek}}, \bibinfo {author} {\bibfnamefont {Nicola}\ \bibnamefont
  {Pancotti}}, \bibinfo {author} {\bibfnamefont {Patrick}\ \bibnamefont
  {Rebentrost}}, \bibinfo {author} {\bibfnamefont {Nathan}\ \bibnamefont
  {Wiebe}}, \ and\ \bibinfo {author} {\bibfnamefont {Seth}\ \bibnamefont
  {Lloyd}},\ }\bibfield  {title} {\enquote {\bibinfo {title} {Quantum machine
  learning},}\ }\href {https://doi.org/10.1038/nature23474} {\bibfield
  {journal} {\bibinfo  {journal} {Nature}\ }\textbf {\bibinfo {volume} {549}},\
  \bibinfo {pages} {195--202} (\bibinfo {year} {2017})}\BibitemShut {NoStop}%
\bibitem [{\citenamefont {Dunjko}\ and\ \citenamefont
  {Briegel}(2018)}]{dunjko2018machine}%
  \BibitemOpen
  \bibfield  {author} {\bibinfo {author} {\bibfnamefont {Vedran}\ \bibnamefont
  {Dunjko}}\ and\ \bibinfo {author} {\bibfnamefont {Hans~J}\ \bibnamefont
  {Briegel}},\ }\bibfield  {title} {\enquote {\bibinfo {title} {Machine
  learning {\&} artificial intelligence in the quantum domain: a review of
  recent progress},}\ }\href {\doibase 10.1088/1361-6633/aab406} {\bibfield
  {journal} {\bibinfo  {journal} {Reports on Progress in Physics}\ }\textbf
  {\bibinfo {volume} {81}},\ \bibinfo {pages} {074001} (\bibinfo {year}
  {2018})}\BibitemShut {NoStop}%
\bibitem [{\citenamefont {Dunjko}\ \emph {et~al.}(2016)\citenamefont {Dunjko},
  \citenamefont {Taylor},\ and\ \citenamefont {Briegel}}]{dunjko2016quantum}%
  \BibitemOpen
  \bibfield  {author} {\bibinfo {author} {\bibfnamefont {Vedran}\ \bibnamefont
  {Dunjko}}, \bibinfo {author} {\bibfnamefont {Jacob~M.}\ \bibnamefont
  {Taylor}}, \ and\ \bibinfo {author} {\bibfnamefont {Hans~J.}\ \bibnamefont
  {Briegel}},\ }\bibfield  {title} {\enquote {\bibinfo {title}
  {Quantum-enhanced machine learning},}\ }\href {\doibase
  10.1103/PhysRevLett.117.130501} {\bibfield  {journal} {\bibinfo  {journal}
  {Phys. Rev. Lett.}\ }\textbf {\bibinfo {volume} {117}},\ \bibinfo {pages}
  {130501} (\bibinfo {year} {2016})}\BibitemShut {NoStop}%
\bibitem [{\citenamefont {Arute}\ \emph {et~al.}(2019)\citenamefont {Arute},
  \citenamefont {Arya}, \citenamefont {Babbush}, \citenamefont {Bacon},
  \citenamefont {Bardin}, \citenamefont {Barends}, \citenamefont {Biswas},
  \citenamefont {Boixo}, \citenamefont {Brandao}, \citenamefont {Buell} \emph
  {et~al.}}]{arute2019quantum}%
  \BibitemOpen
  \bibfield  {author} {\bibinfo {author} {\bibfnamefont {Frank}\ \bibnamefont
  {Arute}}, \bibinfo {author} {\bibfnamefont {Kunal}\ \bibnamefont {Arya}},
  \bibinfo {author} {\bibfnamefont {Ryan}\ \bibnamefont {Babbush}}, \bibinfo
  {author} {\bibfnamefont {Dave}\ \bibnamefont {Bacon}}, \bibinfo {author}
  {\bibfnamefont {Joseph~C}\ \bibnamefont {Bardin}}, \bibinfo {author}
  {\bibfnamefont {Rami}\ \bibnamefont {Barends}}, \bibinfo {author}
  {\bibfnamefont {Rupak}\ \bibnamefont {Biswas}}, \bibinfo {author}
  {\bibfnamefont {Sergio}\ \bibnamefont {Boixo}}, \bibinfo {author}
  {\bibfnamefont {Fernando~GSL}\ \bibnamefont {Brandao}}, \bibinfo {author}
  {\bibfnamefont {David~A}\ \bibnamefont {Buell}},  \emph {et~al.},\ }\bibfield
   {title} {\enquote {\bibinfo {title} {Quantum supremacy using a programmable
  superconducting processor},}\ }\href {\doibase 10.1038/s41586-019-1666-5}
  {\bibfield  {journal} {\bibinfo  {journal} {Nature}\ }\textbf {\bibinfo
  {volume} {574}},\ \bibinfo {pages} {505--510} (\bibinfo {year}
  {2019})}\BibitemShut {NoStop}%
\bibitem [{\citenamefont {Arute}\ \emph {et~al.}(2020)\citenamefont {Arute},
  \citenamefont {Arya}, \citenamefont {Babbush}, \citenamefont {Bacon},
  \citenamefont {Bardin}, \citenamefont {Barends}, \citenamefont {Boixo},
  \citenamefont {Broughton}, \citenamefont {Buckley}, \citenamefont {Buell}
  \emph {et~al.}}]{arute2020hartree}%
  \BibitemOpen
  \bibfield  {author} {\bibinfo {author} {\bibfnamefont {Frank}\ \bibnamefont
  {Arute}}, \bibinfo {author} {\bibfnamefont {Kunal}\ \bibnamefont {Arya}},
  \bibinfo {author} {\bibfnamefont {Ryan}\ \bibnamefont {Babbush}}, \bibinfo
  {author} {\bibfnamefont {Dave}\ \bibnamefont {Bacon}}, \bibinfo {author}
  {\bibfnamefont {Joseph~C.}\ \bibnamefont {Bardin}}, \bibinfo {author}
  {\bibfnamefont {Rami}\ \bibnamefont {Barends}}, \bibinfo {author}
  {\bibfnamefont {Sergio}\ \bibnamefont {Boixo}}, \bibinfo {author}
  {\bibfnamefont {Michael}\ \bibnamefont {Broughton}}, \bibinfo {author}
  {\bibfnamefont {Bob~B.}\ \bibnamefont {Buckley}}, \bibinfo {author}
  {\bibfnamefont {David~A.}\ \bibnamefont {Buell}},  \emph {et~al.},\
  }\bibfield  {title} {\enquote {\bibinfo {title} {Hartree-fock on a
  superconducting qubit quantum computer},}\ }\href {\doibase
  10.1126/science.abb9811} {\bibfield  {journal} {\bibinfo  {journal}
  {Science}\ }\textbf {\bibinfo {volume} {369}},\ \bibinfo {pages} {1084--1089}
  (\bibinfo {year} {2020})}\BibitemShut {NoStop}%
\bibitem [{\citenamefont {Gong}\ \emph {et~al.}(2021)\citenamefont {Gong},
  \citenamefont {Wang}, \citenamefont {Zha}, \citenamefont {Chen},
  \citenamefont {Huang}, \citenamefont {Wu}, \citenamefont {Zhu}, \citenamefont
  {Zhao}, \citenamefont {Li}, \citenamefont {Guo} \emph
  {et~al.}}]{gong2021quantum}%
  \BibitemOpen
  \bibfield  {author} {\bibinfo {author} {\bibfnamefont {Ming}\ \bibnamefont
  {Gong}}, \bibinfo {author} {\bibfnamefont {Shiyu}\ \bibnamefont {Wang}},
  \bibinfo {author} {\bibfnamefont {Chen}\ \bibnamefont {Zha}}, \bibinfo
  {author} {\bibfnamefont {Ming-Cheng}\ \bibnamefont {Chen}}, \bibinfo {author}
  {\bibfnamefont {He-Liang}\ \bibnamefont {Huang}}, \bibinfo {author}
  {\bibfnamefont {Yulin}\ \bibnamefont {Wu}}, \bibinfo {author} {\bibfnamefont
  {Qingling}\ \bibnamefont {Zhu}}, \bibinfo {author} {\bibfnamefont {Youwei}\
  \bibnamefont {Zhao}}, \bibinfo {author} {\bibfnamefont {Shaowei}\
  \bibnamefont {Li}}, \bibinfo {author} {\bibfnamefont {Shaojun}\ \bibnamefont
  {Guo}},  \emph {et~al.},\ }\bibfield  {title} {\enquote {\bibinfo {title}
  {{Quantum walks on a programmable two-dimensional 62-qubit superconducting
  processor}},}\ }\href {https://arxiv.org/abs/2102.02573} {\bibfield
  {journal} {\bibinfo  {journal} {arXiv preprint arXiv:2102.02573}\ } (\bibinfo
  {year} {2021})}\BibitemShut {NoStop}%
\bibitem [{\citenamefont {Figgatt}\ \emph {et~al.}(2017)\citenamefont
  {Figgatt}, \citenamefont {Maslov}, \citenamefont {Landsman}, \citenamefont
  {Linke}, \citenamefont {Debnath},\ and\ \citenamefont
  {Monroe}}]{figgatt2017complete}%
  \BibitemOpen
  \bibfield  {author} {\bibinfo {author} {\bibfnamefont {C.}~\bibnamefont
  {Figgatt}}, \bibinfo {author} {\bibfnamefont {D.}~\bibnamefont {Maslov}},
  \bibinfo {author} {\bibfnamefont {K.~A.}\ \bibnamefont {Landsman}}, \bibinfo
  {author} {\bibfnamefont {N.~M.}\ \bibnamefont {Linke}}, \bibinfo {author}
  {\bibfnamefont {S.}~\bibnamefont {Debnath}}, \ and\ \bibinfo {author}
  {\bibfnamefont {C.}~\bibnamefont {Monroe}},\ }\bibfield  {title} {\enquote
  {\bibinfo {title} {{Complete 3-Qubit Grover search on a programmable quantum
  computer}},}\ }\href {\doibase 10.1038/s41467-017-01904-7} {\bibfield
  {journal} {\bibinfo  {journal} {Nat. Commun.}\ }\textbf {\bibinfo {volume}
  {8}},\ \bibinfo {pages} {1--9} (\bibinfo {year} {2017})}\BibitemShut
  {NoStop}%
\bibitem [{\citenamefont {Gibney}(2020)}]{gibney2020alternative}%
  \BibitemOpen
  \bibfield  {author} {\bibinfo {author} {\bibfnamefont {Elizabeth}\
  \bibnamefont {Gibney}},\ }\bibfield  {title} {\enquote {\bibinfo {title}
  {{Quantum computer race intensifies as alternative technology gains
  steam}},}\ }\href {\doibase 10.1038/d41586-020-03237-w} {\bibfield  {journal}
  {\bibinfo  {journal} {Nature}\ }\textbf {\bibinfo {volume} {587}},\ \bibinfo
  {pages} {342--343} (\bibinfo {year} {2020})}\BibitemShut {NoStop}%
\bibitem [{\citenamefont {Wiebe}(2020)}]{wiebe2020key}%
  \BibitemOpen
  \bibfield  {author} {\bibinfo {author} {\bibfnamefont {Nathan}\ \bibnamefont
  {Wiebe}},\ }\bibfield  {title} {\enquote {\bibinfo {title} {Key questions for
  the quantum machine learner to ask themselves},}\ }\href {\doibase
  10.1088/1367-2630/abac39} {\bibfield  {journal} {\bibinfo  {journal} {New
  Journal of Physics}\ }\textbf {\bibinfo {volume} {22}},\ \bibinfo {pages}
  {091001} (\bibinfo {year} {2020})}\BibitemShut {NoStop}%
\bibitem [{\citenamefont {Preskill}(2018)}]{preskill2018quantum}%
  \BibitemOpen
  \bibfield  {author} {\bibinfo {author} {\bibfnamefont {John}\ \bibnamefont
  {Preskill}},\ }\bibfield  {title} {\enquote {\bibinfo {title} {Quantum
  {C}omputing in the {NISQ} era and beyond},}\ }\href {\doibase
  10.22331/q-2018-08-06-79} {\bibfield  {journal} {\bibinfo  {journal}
  {{Quantum}}\ }\textbf {\bibinfo {volume} {2}},\ \bibinfo {pages} {79}
  (\bibinfo {year} {2018})}\BibitemShut {NoStop}%
\bibitem [{\citenamefont {Caro}\ and\ \citenamefont
  {Datta}(2020)}]{caro2020pseudo}%
  \BibitemOpen
  \bibfield  {author} {\bibinfo {author} {\bibfnamefont {Matthias~C.}\
  \bibnamefont {Caro}}\ and\ \bibinfo {author} {\bibfnamefont {Ishaun}\
  \bibnamefont {Datta}},\ }\bibfield  {title} {\enquote {\bibinfo {title}
  {{Pseudo-dimension of quantum circuits}},}\ }\href {\doibase
  10.1007/s42484-020-00027-5} {\bibfield  {journal} {\bibinfo  {journal}
  {Quantum Mach. Intell.}\ }\textbf {\bibinfo {volume} {2}},\ \bibinfo {pages}
  {14} (\bibinfo {year} {2020})}\BibitemShut {NoStop}%
\bibitem [{\citenamefont {Huang}\ \emph {et~al.}(2021)\citenamefont {Huang},
  \citenamefont {Kueng},\ and\ \citenamefont
  {Preskill}}]{huang2021information}%
  \BibitemOpen
  \bibfield  {author} {\bibinfo {author} {\bibfnamefont {Hsin-Yuan}\
  \bibnamefont {Huang}}, \bibinfo {author} {\bibfnamefont {Richard}\
  \bibnamefont {Kueng}}, \ and\ \bibinfo {author} {\bibfnamefont {John}\
  \bibnamefont {Preskill}},\ }\bibfield  {title} {\enquote {\bibinfo {title}
  {Information-theoretic bounds on quantum advantage in machine learning},}\
  }\href {https://arxiv.org/abs/2101.02464} {\bibfield  {journal} {\bibinfo
  {journal} {arXiv preprint arXiv:2101.02464}\ } (\bibinfo {year}
  {2021})}\BibitemShut {NoStop}%
\bibitem [{\citenamefont {Cheng}\ \emph {et~al.}(2016)\citenamefont {Cheng},
  \citenamefont {Hsieh},\ and\ \citenamefont {Yeh}}]{cheng2016learnability}%
  \BibitemOpen
  \bibfield  {author} {\bibinfo {author} {\bibfnamefont {Hao-Chung}\
  \bibnamefont {Cheng}}, \bibinfo {author} {\bibfnamefont {Min-Hsiu}\
  \bibnamefont {Hsieh}}, \ and\ \bibinfo {author} {\bibfnamefont {Ping-Cheng}\
  \bibnamefont {Yeh}},\ }\bibfield  {title} {\enquote {\bibinfo {title} {The
  learnability of unknown quantum measurements},}\ }\href
  {https://doi.org/10.26421/QIC16.7-8} {\bibfield  {journal} {\bibinfo
  {journal} {Quantum Information \& Computation}\ }\textbf {\bibinfo {volume}
  {16}},\ \bibinfo {pages} {0615--0656} (\bibinfo {year} {2016})}\BibitemShut
  {NoStop}%
\bibitem [{\citenamefont {Huang}\ \emph {et~al.}(2020)\citenamefont {Huang},
  \citenamefont {Broughton}, \citenamefont {Mohseni}, \citenamefont {Babbush},
  \citenamefont {Boixo}, \citenamefont {Neven},\ and\ \citenamefont
  {McClean}}]{huang2020power}%
  \BibitemOpen
  \bibfield  {author} {\bibinfo {author} {\bibfnamefont {Hsin-Yuan}\
  \bibnamefont {Huang}}, \bibinfo {author} {\bibfnamefont {Michael}\
  \bibnamefont {Broughton}}, \bibinfo {author} {\bibfnamefont {Masoud}\
  \bibnamefont {Mohseni}}, \bibinfo {author} {\bibfnamefont {Ryan}\
  \bibnamefont {Babbush}}, \bibinfo {author} {\bibfnamefont {Sergio}\
  \bibnamefont {Boixo}}, \bibinfo {author} {\bibfnamefont {Hartmut}\
  \bibnamefont {Neven}}, \ and\ \bibinfo {author} {\bibfnamefont {Jarrod~R}\
  \bibnamefont {McClean}},\ }\bibfield  {title} {\enquote {\bibinfo {title}
  {Power of data in quantum machine learning},}\ }\href
  {https://arxiv.org/abs/2011.01938} {\bibfield  {journal} {\bibinfo  {journal}
  {arXiv preprint arXiv:2011.01938}\ } (\bibinfo {year} {2020})}\BibitemShut
  {NoStop}%
\bibitem [{\citenamefont {Rocchetto}(2018)}]{rocchetto2018stabiliser}%
  \BibitemOpen
  \bibfield  {author} {\bibinfo {author} {\bibfnamefont {Andrea}\ \bibnamefont
  {Rocchetto}},\ }\bibfield  {title} {\enquote {\bibinfo {title} {Stabiliser
  states are efficiently pac-learnable},}\ }\href
  {https://doi.org/10.26421/QIC18.7-8} {\bibfield  {journal} {\bibinfo
  {journal} {Quantum Information \& Computation}\ }\textbf {\bibinfo {volume}
  {18}},\ \bibinfo {pages} {541--552} (\bibinfo {year} {2018})}\BibitemShut
  {NoStop}%
\bibitem [{\citenamefont {Abbas}\ \emph {et~al.}(2020)\citenamefont {Abbas},
  \citenamefont {Sutter}, \citenamefont {Zoufal}, \citenamefont {Lucchi},
  \citenamefont {Figalli},\ and\ \citenamefont {Woerner}}]{abbas2020power}%
  \BibitemOpen
  \bibfield  {author} {\bibinfo {author} {\bibfnamefont {Amira}\ \bibnamefont
  {Abbas}}, \bibinfo {author} {\bibfnamefont {David}\ \bibnamefont {Sutter}},
  \bibinfo {author} {\bibfnamefont {Christa}\ \bibnamefont {Zoufal}}, \bibinfo
  {author} {\bibfnamefont {Aur{\'e}lien}\ \bibnamefont {Lucchi}}, \bibinfo
  {author} {\bibfnamefont {Alessio}\ \bibnamefont {Figalli}}, \ and\ \bibinfo
  {author} {\bibfnamefont {Stefan}\ \bibnamefont {Woerner}},\ }\bibfield
  {title} {\enquote {\bibinfo {title} {The power of quantum neural networks},}\
  }\href {https://arxiv.org/abs/2011.00027} {\bibfield  {journal} {\bibinfo
  {journal} {arXiv preprint arXiv:2011.00027}\ } (\bibinfo {year}
  {2020})}\BibitemShut {NoStop}%
\bibitem [{\citenamefont {Wright}\ and\ \citenamefont
  {McMahon}(2020)}]{wright2020capacity}%
  \BibitemOpen
  \bibfield  {author} {\bibinfo {author} {\bibfnamefont {Logan~G.}\
  \bibnamefont {Wright}}\ and\ \bibinfo {author} {\bibfnamefont {Peter~L.}\
  \bibnamefont {McMahon}},\ }\bibfield  {title} {\enquote {\bibinfo {title}
  {The capacity of quantum neural networks},}\ }in\ \href
  {http://www.osapublishing.org/abstract.cfm?URI=CLEO_QELS-2020-JM4G.5} {\emph
  {\bibinfo {booktitle} {Conference on Lasers and Electro-Optics}}}\ (\bibinfo
  {publisher} {Optical Society of America},\ \bibinfo {year} {2020})\ p.\
  \bibinfo {pages} {JM4G.5}\BibitemShut {NoStop}%
\bibitem [{\citenamefont {Arunachalam}\ and\ \citenamefont
  {de~Wolf}(2017)}]{srinivasan2017guest}%
  \BibitemOpen
  \bibfield  {author} {\bibinfo {author} {\bibfnamefont {Srinivasan}\
  \bibnamefont {Arunachalam}}\ and\ \bibinfo {author} {\bibfnamefont {Ronald}\
  \bibnamefont {de~Wolf}},\ }\bibfield  {title} {\enquote {\bibinfo {title}
  {Guest column: A survey of quantum learning theory},}\ }\href {\doibase
  10.1145/3106700.3106710} {\bibfield  {journal} {\bibinfo  {journal} {SIGACT
  News}\ }\textbf {\bibinfo {volume} {48}},\ \bibinfo {pages} {41–67}
  (\bibinfo {year} {2017})}\BibitemShut {NoStop}%
\bibitem [{\citenamefont {Bu}\ \emph {et~al.}(2021{\natexlab{a}})\citenamefont
  {Bu}, \citenamefont {Koh}, \citenamefont {Li}, \citenamefont {Luo},\ and\
  \citenamefont {Zhang}}]{bu2021on}%
  \BibitemOpen
  \bibfield  {author} {\bibinfo {author} {\bibfnamefont {Kaifeng}\ \bibnamefont
  {Bu}}, \bibinfo {author} {\bibfnamefont {Dax~Enshan}\ \bibnamefont {Koh}},
  \bibinfo {author} {\bibfnamefont {Lu}~\bibnamefont {Li}}, \bibinfo {author}
  {\bibfnamefont {Qingxian}\ \bibnamefont {Luo}}, \ and\ \bibinfo {author}
  {\bibfnamefont {Yaobo}\ \bibnamefont {Zhang}},\ }\bibfield  {title} {\enquote
  {\bibinfo {title} {On the statistical complexity of quantum circuits},}\
  }\href {https://arxiv.org/abs/2101.06154} {\bibfield  {journal} {\bibinfo
  {journal} {arXiv preprint arXiv:2101.06154}\ } (\bibinfo {year}
  {2021}{\natexlab{a}})}\BibitemShut {NoStop}%
\bibitem [{\citenamefont {Bu}\ \emph {et~al.}(2021{\natexlab{b}})\citenamefont
  {Bu}, \citenamefont {Koh}, \citenamefont {Li}, \citenamefont {Luo},\ and\
  \citenamefont {Zhang}}]{bu2021effects}%
  \BibitemOpen
  \bibfield  {author} {\bibinfo {author} {\bibfnamefont {Kaifeng}\ \bibnamefont
  {Bu}}, \bibinfo {author} {\bibfnamefont {Dax~Enshan}\ \bibnamefont {Koh}},
  \bibinfo {author} {\bibfnamefont {Lu}~\bibnamefont {Li}}, \bibinfo {author}
  {\bibfnamefont {Qingxian}\ \bibnamefont {Luo}}, \ and\ \bibinfo {author}
  {\bibfnamefont {Yaobo}\ \bibnamefont {Zhang}},\ }\bibfield  {title} {\enquote
  {\bibinfo {title} {Effects of quantum resources on the statistical complexity
  of quantum circuits},}\ }\href {https://arxiv.org/abs/2102.03282} {\bibfield
  {journal} {\bibinfo  {journal} {arXiv preprint arXiv:2102.03282}\ } (\bibinfo
  {year} {2021}{\natexlab{b}})}\BibitemShut {NoStop}%
\bibitem [{\citenamefont {Bartlett}\ and\ \citenamefont
  {Mendelson}(2003)}]{Bartlett03}%
  \BibitemOpen
  \bibfield  {author} {\bibinfo {author} {\bibfnamefont {Peter~L.}\
  \bibnamefont {Bartlett}}\ and\ \bibinfo {author} {\bibfnamefont {Shahar}\
  \bibnamefont {Mendelson}},\ }\bibfield  {title} {\enquote {\bibinfo {title}
  {{R}ademacher and {G}aussian complexities: Risk bounds and structural
  results},}\ }\href@noop {} {\bibfield  {journal} {\bibinfo  {journal} {J.
  Mach. Learn. Res.}\ }\textbf {\bibinfo {volume} {3}},\ \bibinfo {pages}
  {463–482} (\bibinfo {year} {2003})}\BibitemShut {NoStop}%
\bibitem [{\citenamefont {Koltchinskii}(2006)}]{koltchinskii2006}%
  \BibitemOpen
  \bibfield  {author} {\bibinfo {author} {\bibfnamefont {Vladimir}\
  \bibnamefont {Koltchinskii}},\ }\bibfield  {title} {\enquote {\bibinfo
  {title} {Local {R}ademacher complexities and oracle inequalities in risk
  minimization},}\ }\href {\doibase 10.1214/009053606000001019} {\bibfield
  {journal} {\bibinfo  {journal} {Ann. Statist.}\ }\textbf {\bibinfo {volume}
  {34}},\ \bibinfo {pages} {2593--2656} (\bibinfo {year} {2006})}\BibitemShut
  {NoStop}%
\bibitem [{\citenamefont {Howard}\ and\ \citenamefont
  {Campbell}(2017)}]{howard2017application}%
  \BibitemOpen
  \bibfield  {author} {\bibinfo {author} {\bibfnamefont {Mark}\ \bibnamefont
  {Howard}}\ and\ \bibinfo {author} {\bibfnamefont {Earl}\ \bibnamefont
  {Campbell}},\ }\bibfield  {title} {\enquote {\bibinfo {title} {Application of
  a resource theory for magic states to fault-tolerant quantum computing},}\
  }\href {\doibase 10.1103/PhysRevLett.118.090501} {\bibfield  {journal}
  {\bibinfo  {journal} {Phys. Rev. Lett.}\ }\textbf {\bibinfo {volume} {118}},\
  \bibinfo {pages} {090501} (\bibinfo {year} {2017})}\BibitemShut {NoStop}%
\bibitem [{\citenamefont {Wang}\ \emph {et~al.}(2019)\citenamefont {Wang},
  \citenamefont {Wilde},\ and\ \citenamefont {Su}}]{wang2019quantifying}%
  \BibitemOpen
  \bibfield  {author} {\bibinfo {author} {\bibfnamefont {Xin}\ \bibnamefont
  {Wang}}, \bibinfo {author} {\bibfnamefont {Mark~M}\ \bibnamefont {Wilde}}, \
  and\ \bibinfo {author} {\bibfnamefont {Yuan}\ \bibnamefont {Su}},\ }\bibfield
   {title} {\enquote {\bibinfo {title} {Quantifying the magic of quantum
  channels},}\ }\href {\doibase 10.1088/1367-2630/ab451d} {\bibfield  {journal}
  {\bibinfo  {journal} {New Journal of Physics}\ }\textbf {\bibinfo {volume}
  {21}},\ \bibinfo {pages} {103002} (\bibinfo {year} {2019})}\BibitemShut
  {NoStop}%
\bibitem [{\citenamefont {Coecke}\ \emph {et~al.}(2016)\citenamefont {Coecke},
  \citenamefont {Fritz},\ and\ \citenamefont {Spekkens}}]{COECKE16}%
  \BibitemOpen
  \bibfield  {author} {\bibinfo {author} {\bibfnamefont {Bob}\ \bibnamefont
  {Coecke}}, \bibinfo {author} {\bibfnamefont {Tobias}\ \bibnamefont {Fritz}},
  \ and\ \bibinfo {author} {\bibfnamefont {Robert~W.}\ \bibnamefont
  {Spekkens}},\ }\bibfield  {title} {\enquote {\bibinfo {title} {A mathematical
  theory of resources},}\ }\href {\doibase
  https://doi.org/10.1016/j.ic.2016.02.008} {\bibfield  {journal} {\bibinfo
  {journal} {Information and Computation}\ }\textbf {\bibinfo {volume} {250}},\
  \bibinfo {pages} {59 -- 86} (\bibinfo {year} {2016})},\ \bibinfo {note}
  {{Q}uantum Physics and Logic}\BibitemShut {NoStop}%
\bibitem [{\citenamefont {Chitambar}\ and\ \citenamefont
  {Gour}(2019)}]{chitambar_2019}%
  \BibitemOpen
  \bibfield  {author} {\bibinfo {author} {\bibfnamefont {Eric}\ \bibnamefont
  {Chitambar}}\ and\ \bibinfo {author} {\bibfnamefont {Gilad}\ \bibnamefont
  {Gour}},\ }\bibfield  {title} {\enquote {\bibinfo {title} {Quantum resource
  theories},}\ }\href {\doibase 10.1103/RevModPhys.91.025001} {\bibfield
  {journal} {\bibinfo  {journal} {Rev. Mod. Phys.}\ }\textbf {\bibinfo {volume}
  {91}},\ \bibinfo {pages} {025001} (\bibinfo {year} {2019})}\BibitemShut
  {NoStop}%
\bibitem [{\citenamefont {Gao}\ and\ \citenamefont
  {Duan}(2018)}]{gao2018efficient}%
  \BibitemOpen
  \bibfield  {author} {\bibinfo {author} {\bibfnamefont {Xun}\ \bibnamefont
  {Gao}}\ and\ \bibinfo {author} {\bibfnamefont {Luming}\ \bibnamefont
  {Duan}},\ }\bibfield  {title} {\enquote {\bibinfo {title} {Efficient
  classical simulation of noisy quantum computation},}\ }\href
  {https://arxiv.org/abs/1810.03176} {\bibfield  {journal} {\bibinfo  {journal}
  {arXiv preprint arXiv:1810.03176}\ } (\bibinfo {year} {2018})}\BibitemShut
  {NoStop}%
\bibitem [{\citenamefont {Takahashi}\ \emph {et~al.}(2020)\citenamefont
  {Takahashi}, \citenamefont {Takeuchi},\ and\ \citenamefont
  {Tani}}]{takahashi2020classically}%
  \BibitemOpen
  \bibfield  {author} {\bibinfo {author} {\bibfnamefont {Yasuhiro}\
  \bibnamefont {Takahashi}}, \bibinfo {author} {\bibfnamefont {Yuki}\
  \bibnamefont {Takeuchi}}, \ and\ \bibinfo {author} {\bibfnamefont
  {Seiichiro}\ \bibnamefont {Tani}},\ }\bibfield  {title} {\enquote {\bibinfo
  {title} {{Classically Simulating Quantum Circuits with Local Depolarizing
  Noise}},}\ }in\ \href {\doibase 10.4230/LIPIcs.MFCS.2020.83} {\emph {\bibinfo
  {booktitle} {45th International Symposium on Mathematical Foundations of
  Computer Science (MFCS 2020)}}},\ \bibinfo {series} {Leibniz International
  Proceedings in Informatics (LIPIcs)}, Vol.\ \bibinfo {volume} {170},\
  \bibinfo {editor} {edited by\ \bibinfo {editor} {\bibfnamefont {Javier}\
  \bibnamefont {Esparza}}\ and\ \bibinfo {editor} {\bibfnamefont {Daniel}\
  \bibnamefont {Kr{\'a}ľ}}}\ (\bibinfo  {publisher} {Schloss
  Dagstuhl--Leibniz-Zentrum f{\"u}r Informatik},\ \bibinfo {address} {Dagstuhl,
  Germany},\ \bibinfo {year} {2020})\ pp.\ \bibinfo {pages}
  {83:1--83:13}\BibitemShut {NoStop}%
\bibitem [{\citenamefont {Fujii}\ and\ \citenamefont
  {Tamate}(2016)}]{fujii2016computational}%
  \BibitemOpen
  \bibfield  {author} {\bibinfo {author} {\bibfnamefont {Keisuke}\ \bibnamefont
  {Fujii}}\ and\ \bibinfo {author} {\bibfnamefont {Shuhei}\ \bibnamefont
  {Tamate}},\ }\bibfield  {title} {\enquote {\bibinfo {title} {{Computational
  quantum-classical boundary of noisy commuting quantum circuits}},}\ }\href
  {\doibase 10.1038/srep25598} {\bibfield  {journal} {\bibinfo  {journal} {Sci.
  Rep.}\ }\textbf {\bibinfo {volume} {6}},\ \bibinfo {pages} {1--15} (\bibinfo
  {year} {2016})}\BibitemShut {NoStop}%
\bibitem [{\citenamefont {Bremner}\ \emph {et~al.}(2017)\citenamefont
  {Bremner}, \citenamefont {Montanaro},\ and\ \citenamefont
  {Shepherd}}]{bremner2017achieving}%
  \BibitemOpen
  \bibfield  {author} {\bibinfo {author} {\bibfnamefont {Michael~J.}\
  \bibnamefont {Bremner}}, \bibinfo {author} {\bibfnamefont {Ashley}\
  \bibnamefont {Montanaro}}, \ and\ \bibinfo {author} {\bibfnamefont {Dan~J.}\
  \bibnamefont {Shepherd}},\ }\bibfield  {title} {\enquote {\bibinfo {title}
  {Achieving quantum supremacy with sparse and noisy commuting quantum
  computations},}\ }\href {\doibase 10.22331/q-2017-04-25-8} {\bibfield
  {journal} {\bibinfo  {journal} {{Quantum}}\ }\textbf {\bibinfo {volume}
  {1}},\ \bibinfo {pages} {8} (\bibinfo {year} {2017})}\BibitemShut {NoStop}%
\bibitem [{\citenamefont {Bu}\ and\ \citenamefont
  {Koh}(2019)}]{bu2019efficient}%
  \BibitemOpen
  \bibfield  {author} {\bibinfo {author} {\bibfnamefont {Kaifeng}\ \bibnamefont
  {Bu}}\ and\ \bibinfo {author} {\bibfnamefont {Dax~Enshan}\ \bibnamefont
  {Koh}},\ }\bibfield  {title} {\enquote {\bibinfo {title} {Efficient classical
  simulation of {C}lifford circuits with nonstabilizer input states},}\ }\href
  {\doibase 10.1103/PhysRevLett.123.170502} {\bibfield  {journal} {\bibinfo
  {journal} {Phys. Rev. Lett.}\ }\textbf {\bibinfo {volume} {123}},\ \bibinfo
  {pages} {170502} (\bibinfo {year} {2019})}\BibitemShut {NoStop}%
\bibitem [{\citenamefont {Nielsen}\ and\ \citenamefont
  {Chuang}(2010)}]{nielsen2010quantum}%
  \BibitemOpen
  \bibfield  {author} {\bibinfo {author} {\bibfnamefont {Michael~A}\
  \bibnamefont {Nielsen}}\ and\ \bibinfo {author} {\bibfnamefont {Isaac~L}\
  \bibnamefont {Chuang}},\ }\href@noop {} {\emph {\bibinfo {title} {Quantum
  computation and quantum information}}}\ (\bibinfo  {publisher} {Cambridge
  University Press},\ \bibinfo {year} {2010})\BibitemShut {NoStop}%
\bibitem [{\citenamefont {Watrous}(2018)}]{watrous2018theory}%
  \BibitemOpen
  \bibfield  {author} {\bibinfo {author} {\bibfnamefont {John}\ \bibnamefont
  {Watrous}},\ }\href@noop {} {\emph {\bibinfo {title} {The theory of quantum
  information}}}\ (\bibinfo  {publisher} {Cambridge University Press},\
  \bibinfo {year} {2018})\BibitemShut {NoStop}%
\bibitem [{\citenamefont {Mendl}\ and\ \citenamefont
  {Wolf}(2009)}]{mendl2009unital}%
  \BibitemOpen
  \bibfield  {author} {\bibinfo {author} {\bibfnamefont {Christian~B}\
  \bibnamefont {Mendl}}\ and\ \bibinfo {author} {\bibfnamefont {Michael~M}\
  \bibnamefont {Wolf}},\ }\bibfield  {title} {\enquote {\bibinfo {title}
  {Unital quantum channels--convex structure and revivals of birkhoff’s
  theorem},}\ }\href {https://doi.org/10.1007/s00220-009-0824-2} {\bibfield
  {journal} {\bibinfo  {journal} {Communications in Mathematical Physics}\
  }\textbf {\bibinfo {volume} {289}},\ \bibinfo {pages} {1057--1086} (\bibinfo
  {year} {2009})}\BibitemShut {NoStop}%
\bibitem [{\citenamefont {Tregub}(1986)}]{tregub1986bistochastic}%
  \BibitemOpen
  \bibfield  {author} {\bibinfo {author} {\bibfnamefont {SL}~\bibnamefont
  {Tregub}},\ }\bibfield  {title} {\enquote {\bibinfo {title} {Bistochastic
  operators on finite-dimensional von {N}eumann algebras},}\ }\href@noop {}
  {\bibfield  {journal} {\bibinfo  {journal} {Izvestiya Vysshikh Uchebnykh
  Zavedenii. Matematika}\ ,\ \bibinfo {pages} {75--77}} (\bibinfo {year}
  {1986})}\BibitemShut {NoStop}%
\bibitem [{\citenamefont {Temme}\ \emph {et~al.}(2017)\citenamefont {Temme},
  \citenamefont {Bravyi},\ and\ \citenamefont {Gambetta}}]{Temme2017}%
  \BibitemOpen
  \bibfield  {author} {\bibinfo {author} {\bibfnamefont {Kristan}\ \bibnamefont
  {Temme}}, \bibinfo {author} {\bibfnamefont {Sergey}\ \bibnamefont {Bravyi}},
  \ and\ \bibinfo {author} {\bibfnamefont {Jay~M.}\ \bibnamefont {Gambetta}},\
  }\bibfield  {title} {\enquote {\bibinfo {title} {Error mitigation for
  short-depth quantum circuits},}\ }\href {\doibase
  10.1103/PhysRevLett.119.180509} {\bibfield  {journal} {\bibinfo  {journal}
  {Phys. Rev. Lett.}\ }\textbf {\bibinfo {volume} {119}},\ \bibinfo {pages}
  {180509} (\bibinfo {year} {2017})}\BibitemShut {NoStop}%
\bibitem [{\citenamefont {Takagi}(2020)}]{Takagi2020}%
  \BibitemOpen
  \bibfield  {author} {\bibinfo {author} {\bibfnamefont {Ryuji}\ \bibnamefont
  {Takagi}},\ }\bibfield  {title} {\enquote {\bibinfo {title} {Optimal resource
  cost for error mitigation},}\ }\href {https://arxiv.org/abs/2006.12509}
  {\bibfield  {journal} {\bibinfo  {journal} {arXiv preprint arXiv:2006.12509}\
  } (\bibinfo {year} {2020})}\BibitemShut {NoStop}%
\end{thebibliography}%

\appendix
\widetext

\section{Proof of Proposition \ref{Prop:main2}}\label{apen:pro1}

\begin{proof}
To prove this statement, it suffices to prove that 
if $\max_{\Phi:\norm{M^{\Phi}}_{1,\infty}\leq \mu}\norm{M^{\Phi_{\epsilon}}}_{1,\infty}>1$, then
\begin{eqnarray}
\max_{\Phi:\norm{M^{\Phi}}_{1,\infty}\leq \mu}\norm{M^{\Phi_{\epsilon}}}_{1,\infty}
\leq (1-4\epsilon)\mu.
\end{eqnarray}
By the definition of $\norm{\cdot}_{1,\infty}$, we have
\begin{eqnarray*}
\norm{M^{\Phi_{\epsilon}}}_{1,\infty}
&=&\max\left\{\norm{M^{\Phi_{\epsilon}}_{\vec{0}}}_1, \max_{\vec{z}\neq \vec{0}}
\norm{M^{\Phi_{\epsilon}}_{\vec{z}}}_1
\right\}\\
&=&\max\left\{\norm{M^{\Phi_{\epsilon}}_{\vec{0}}}_1, \max_{\vec{z}\neq \vec{0}}
(1-4\epsilon)^{w(\vec{z})}
\norm{M^{\Phi}_{\vec{z}}}_1
\right\}\\
&=&\max\left\{1, \max_{\vec{z}\neq \vec{0}}
(1-4\epsilon)^{w(\vec{z})}
\norm{M^{\Phi}_{\vec{z}}}_1
\right\}.
\end{eqnarray*}
Thus, 
if $\max_{\Phi:\norm{M^{\Phi}}_{1,\infty}\leq \mu}\norm{M^{\Phi_{\epsilon}}}_{1,\infty}>1$, 
then we have 
\begin{eqnarray*}
\max_{\Phi:\norm{M^{\Phi}}_{1,\infty}\leq \mu}\norm{M^{\Phi_{\epsilon}}}_{1,\infty}
&=&\max_{\Phi:\norm{M^{\Phi}}_{1,\infty}\leq \mu}
\max_{\vec{z}\neq \vec{0}} (1-4\epsilon)^{w(\vec{z})}\norm{M^{\Phi}_{\vec{z}}}_1\\
&\leq& (1-4\epsilon)\max_{\Phi:\norm{M^{\Phi}}_{1,\infty}\leq \mu}\max_{\vec{z}\neq \vec{0}} \norm{M^{\Phi}_{\vec{z}}}_1\\
&\leq&(1-4\epsilon)\max_{\Phi:\norm{M^{\Phi}}_{1,\infty}\leq \mu}\norm{M^{\Phi}}_{1,\infty}\\
&\leq& (1-4\epsilon)\mu.
\end{eqnarray*}

\end{proof}

\section{Proof of Proposition \ref{prop:main3}}\label{apen:main3}

\begin{proof}
Since the noisy channels can be written as 
$D(\rho)=
(1-\sum_{j}p_j)
\rho+\sum_jp_jU_j\rho U^\dag_j$, 
 any noisy quantum circuit $C\in \mathcal{C}^{(k+1)}_{\text{noisy}}$ can be written as
 \begin{eqnarray}
 C=(1-\sum_{j}p_j)
C_0+\sum_jp_jC_j,
 \end{eqnarray}
where each $C_i\in \mathcal{C}^{(k)}_{\text{noisy}} $. It follows that 
\begin{eqnarray}
f_C
=(1-\sum_{j}p_j)
f_{C_0}+\sum_jp_jf_{C_j}.
\end{eqnarray}
By the triangle inequality, 
\begin{eqnarray*}
\left|\sum^m_{i=1}\epsilon_i f_C(z_i)\right|
&\geq& (1-\sum_{j}p_j)\left|\sum^m_{i=1}\epsilon_i f_{C_0}(z_i)\right|
-\sum_jp_j\left|\sum^m_{i=1}\epsilon_i f_{C_j}(z_i)\right|\\
&\geq& (1-\sum_{j}p_j)\left|\sum^m_{i=1}\epsilon_i f_{C_0}(z_i)\right|
-\sum_jp_j \sup_{C_j\in \mathcal{C}^{(k)}_{\text{noisy}}}\left|\sum^m_{i=1}\epsilon_i f_{C_j}(z_i)\right|.
\end{eqnarray*}
Therefore, we have 
\begin{eqnarray}
R_S\left(\mathcal{F}\circ \mathcal{C}^{(k+1)}_{\text{noisy}}\right)
\geq \left(1-2\sum_jp_j\right)R_S\left(\mathcal{F}\circ\mathcal{C}^{(k)}_{\text{noisy}}\right).
\end{eqnarray}

\end{proof}

\section{Proof of Theorem \ref{thm:main4}}\label{apen:main4}

\begin{proof}
By the definition of $\gamma_{k,k+1}$,  for any noisy quantum circuit $C\in \mathcal{C}^{(k)}_{\text{noisy}} $,  there exist two 
noisy quantum circuits $C_1, C_2\in \Conv(\mathcal{C}^{(k+1)}_{\text{noisy}}) $ such that 
\begin{eqnarray}
C=(1+\gamma_{k,k+1})C_1-\gamma_{k,k+1}C_2.
\end{eqnarray}
Therefore, we have 
\begin{eqnarray}
\mathcal{C}^{(k)}_{\text{noisy}} 
\subset 
(1+\gamma_{k,k+1})\Conv(\mathcal{C}^{(k+1)}_{\text{noisy}} )-\gamma_{k,k+1}
Conv(\mathcal{C}^{(k+1)}_{\text{noisy}}).
\end{eqnarray}
Hence,
\begin{eqnarray}
\mathcal{F}\circ 
\mathcal{C}^{(k)}_{\text{noisy}} 
\subset
(1+\gamma_{k,k+1})\mathcal{F}\circ \Conv(\mathcal{C}^{(k+1)}_{\text{noisy}} )-\gamma_{k,k+1}
\mathcal{F}\circ \Conv(\mathcal{C}^{(k+1)}_{\text{noisy}}).
\end{eqnarray}
Therefore, we have 
\begin{eqnarray}
R_S\left(\mathcal{F}\circ 
\mathcal{C}^{(k)}_{\text{noisy}} \right)
\leq 
(1+2\gamma_{k,k+1})R_S\left(\mathcal{F}\circ \Conv(\mathcal{C}^{(k+1)}_{\text{noisy}} )\right)
=(1+2\gamma_{k,k+1})R_S\left(\mathcal{F}\circ \mathcal{C}^{(k+1)}_{\text{noisy}} \right),
\end{eqnarray}
where the last equality follows from the fact that the Rademacher complexity is invariant under
convex combinations.
\end{proof}

\end{document}